\def\BibTeX{{\rm B\kern-.05em{\sc i\kern-.025em b}\kern-.08em
		T\kern-.1667em\lower.7ex\hbox{E}\kern-.125emX}}
\pgfplotsset{compat=newest}
\newacro{IOT}[IoT]{Internet-of-Things}
\newacro{IS}[IS]{intelligent surface}
\acrodef{ISAC}[ISAC]{integrated sensing and communication}
\acrodef{2D}[2D]{two-dimensional}
\acrodef{3D}[3D]{three-dimensional}
\acrodef{SNR}[SNR]{signal-to-noise ratio}
\acrodef{NN}[NN]{neural network}
\acrodef{DL}[DL]{downlink}
\acrodef{UL}[UL]{uplink}
\acrodef{OMP}[OMP]{orthogonal matching pursuit}
\acrodef{ISAC}[ISAC]{integrated sensing and communication}
\acrodef{MIMO}[MIMO]{multiple-input multiple-output}
\acrodef{OFDM}[OFDM]{orthogonal frequency-division multiplexing}
\acrodef{AoA}[AoA]{angle of arrival}
\acrodef{AoD}[AoD]{angle of departure}
\acrodef{RoI}[RoI]{region of interest}
\acrodef{CSI}[CSI]{channel state information}
\acrodef{FD}[FD]{full-duplex}
\acrodef{AWGN}[AWGN]{additive white Gaussian noise}
\acrodef{MU}[MU]{multi-user}
\acrodef{UE}[UE]{user equipment}
\acrodef{CIS}[CIS]{continuous intelligent surface}
\acrodef{DIS}[DIS]{discrete intelligent surface}
\acrodef{Rx}{receiver}
\acrodef{Tx}{transmitter}
\acrodef{DFT}[DFT]{discrete Fourier transform}
\acrodef{IDFT}[IDFT]{inverse discrete Fourier transform}
\acrodef{E2E}[E2E]{end-to-end}
\acrodef{RG}[RG]{resource grid}
\acrodef{RB}[RB]{resource block}
\acrodef{LDPC}[LDPC]{low-density parity-check}
\acrodef{QAM}[QAM]{quadrature amplitude modulation}
\acrodef{CIR}[CIR]{channel impulse response}
\acrodef{DC}[DC]{direct-current}
\acrodef{BS}[BS]{base station}
\acrodef{ULA}[ULA]{uniform linear array}
\acrodef{BCE}[BCE]{binary cross entropy}
\acrodef{CCE}[CCE]{categorical cross entropy}
\acrodef{LS}[LS]{least-square}
\acrodef{SVD}[SVD]{singular value decomposition}
\acrodef{MLP}[MLP]{multi-layer perceptron}
\acrodef{CNN}[CNN]{convolutional neural network}
\tikzset{every picture/.append style={line join=round,line cap=round,}}
\tikzstyle{solid}=[dash pattern=]
\tikzstyle{dotted}=[dash pattern=on 0.0\pgflinewidth off 2.5\pgflinewidth]
\tikzstyle{dashed}=[dash pattern=on 4.0\pgflinewidth off 3.0\pgflinewidth]
\tikzstyle{dashdotted}=[dash pattern=on 0.0\pgflinewidth off 2.0\pgflinewidth on 3.0\pgflinewidth off 2.0\pgflinewidth]
\pgfplotsset{every tick label/.append style={font=\large}}
\pgfplotsset{every axis legend/.append style={font=\Large}}
\pgfplotsset{every axis label/.append style={font=\Large}}
\definecolor{bl1}{HTML}{F7FCF0}
\definecolor{bl2}{HTML}{E0F3DB}
\definecolor{bl3}{HTML}{CCEBC5}
\definecolor{bl4}{HTML}{A8DDB5}
\definecolor{bl5}{HTML}{7BCCC4}
\tikzstyle{startstop} = [rectangle, rounded corners, minimum width=0.5cm, minimum height=1cm,text centered, draw=black, fill=red!30]
\tikzstyle{io} = [trapezium, trapezium left angle=70, trapezium right angle=110, minimum width=0.5cm, minimum height=1cm, text centered, draw=black, fill=blue!30]
\tikzstyle{process} = [rectangle, minimum width=0.5cm, minimum height=1cm, text centered, text width=3cm, draw=black, fill=orange!30]
\tikzstyle{decision} = [diamond, minimum width=0.5cm, minimum height=1cm, text centered, draw=black, fill=green!30]
\tikzstyle{arrow} = [very thick,->,>=latex]
\tikzstyle{invisible} =[rectangle, node distance=3cm, text=white]
\newcommand{\nH}{{\mathrm{H}}}
\newcommand{\nT}{{\mathrm{T}}}
\newcommand{\SceneColSet}{{\Set \Omega}} 
\newcommand{\LengthInfoBit}{{L_{\rm info}}}
\newcommand{\op}[1]{\mathbb{#1}}
\newtheorem{proposition}{Proposition}
\theoremstyle{remark}
\newtheorem{remark}{Remark}
\newtheorem{lemma}{Lemma}
\newtheorem{definition}{Definition}
\begin{document}
	
	\title{
		Neural Integrated Sensing and Communication\\ for the MIMO-OFDM Downlink
	}
	
	\author{
		\IEEEauthorblockN{
			Ziyi~Wang, Frederik~Zumegen, and Christoph~Studer 
		}
		\thanks{}
				\thanks{The work of FZ and CS was supported in part by the Swiss National Science Foundation (SNSF) grant 200021\_207314 and by CHIST-ERA grant for the project CHASER (CHIST-ERA-22-WAI-01) through the SNSF grant 20CH21\_218704. The work of CS was also supported by the European Commission within the context of the project 6G-REFERENCE (6G Hardware Enablers for Cell Free Coherent Communications and Sensing), funded under EU Horizon Europe Grant Agreement 101139155.}
		\thanks{Ziyi~Wang was with the Department of Information Technology and Electrical Engineering, ETH Zurich, 8092 Zurich, Switzerland. He is now with the Institute of Electrical and Micro Engineering, EPFL, 1015 Lausanne, Switzerland. (email: ziyi.wang@epfl.ch)
			
			Frederik~Zumegen, and Christoph~Studer are with the Department of Information Technology and Electrical Engineering, ETH Zurich, 8092 Zurich, Switzerland (email: fzumegen@ethz.ch, studer@ethz.ch).

		}
	}
	
	\maketitle
	
	\begin{abstract}
		
		The ongoing convergence of spectrum and hardware requirements for wireless sensing and communication applications has fueled the \ac{ISAC} paradigm in next-generation networks. Neural-network-based \ac{ISAC} leverages data-driven learning techniques to add sensing capabilities to existing communication infrastructure.
		This paper presents a novel signal-processing framework for such neural \ac{ISAC} systems based on the \ac{MIMO} and \ac{OFDM} downlink. Our approach enables generalized sensing functionality without modifying the \ac{MIMO}-\ac{OFDM} communication link. Specifically, our neural \ac{ISAC} pipeline measures the backscattered communication signals to generate discrete map representations of spatial occupancy, formulated as multiclass or multilabel classification problems, which can then be utilized by specialized downstream tasks. 
		To improve sensing performance in closed or cluttered environments, our neural \ac{ISAC} pipeline relies on features specifically designed to mitigate strong reflective paths.
		Extensive simulations using ray-tracing models demonstrate that our neural \ac{ISAC} framework reliably reconstructs scene maps without altering the \ac{MIMO}-\ac{OFDM} communication pipeline or reducing data rates.
		
	\end{abstract}
	
	\begin{IEEEkeywords}
		Discrete map representations,
		integrated sensing and communication (ISAC),
		multiple-input multiple-output (MIMO),
		neural sensing estimator,
		orthogonal frequency-division multiplexing (OFDM)
	\end{IEEEkeywords}

	\section{Introduction}
	\acresetall
	\IEEEPARstart{S}{ensing} plays a key role in next-generation wireless networks.
	Potential applications include location-based services \cite{WanLiuSheConWin:J22, PatAshKypHerMosCor:M05, WanLiuSheConWin:J24}, object detection and tracking \cite{BarConWin:J17, ShaSauBucVar:J19}, smart environment \cite{CarFosBelCorBorTalCur:M13, ZanBuiCasVanZor:J14, PasBurFelZabDecVerAnd:J18, MorZamSka:J16}, and Internet-of-Things (IoT) \cite{LuoHoaWanNiyKimHan:J16, CheThoJarLohAleLepBhuBupFerHonLinRuoKorKuu:J17, WinMeyLiuDaiBarCon:J18, MioSicDepChl:J12, NagZhaNev:J17}. 
	In order to achieve high data rate and accurate sensing performance, both wireless sensing and communication systems build upon wider frequency bands and larger antenna apertures with multiple antennas~\cite{LiuHuaLiWanLiHanLiuDuTanLuSheColChe:J22}.  
	However, both the spectrum resources and large-aperture antenna arrays inevitably incur significant implementation and deployment costs \cite{CuiLiuJinMu:M21, LiuCuiMasXuHanEldBuz:J22}.
	At the same time, with the continuous evolution of sensing and communication systems, their properties in spectrum resource and hardware requirements are increasingly similar.
	Therefore, a promising approach to mitigating spectrum limitations and reducing hardware as well as deployment costs integrates sensing and communication into a unified system, which is the foundation of \emph{\acl{ISAC}} (\acs{ISAC}). 
	
	\acused{ISAC}

	\begin{figure}[!t]
		\centering
		\includegraphics[width=0.95\columnwidth]{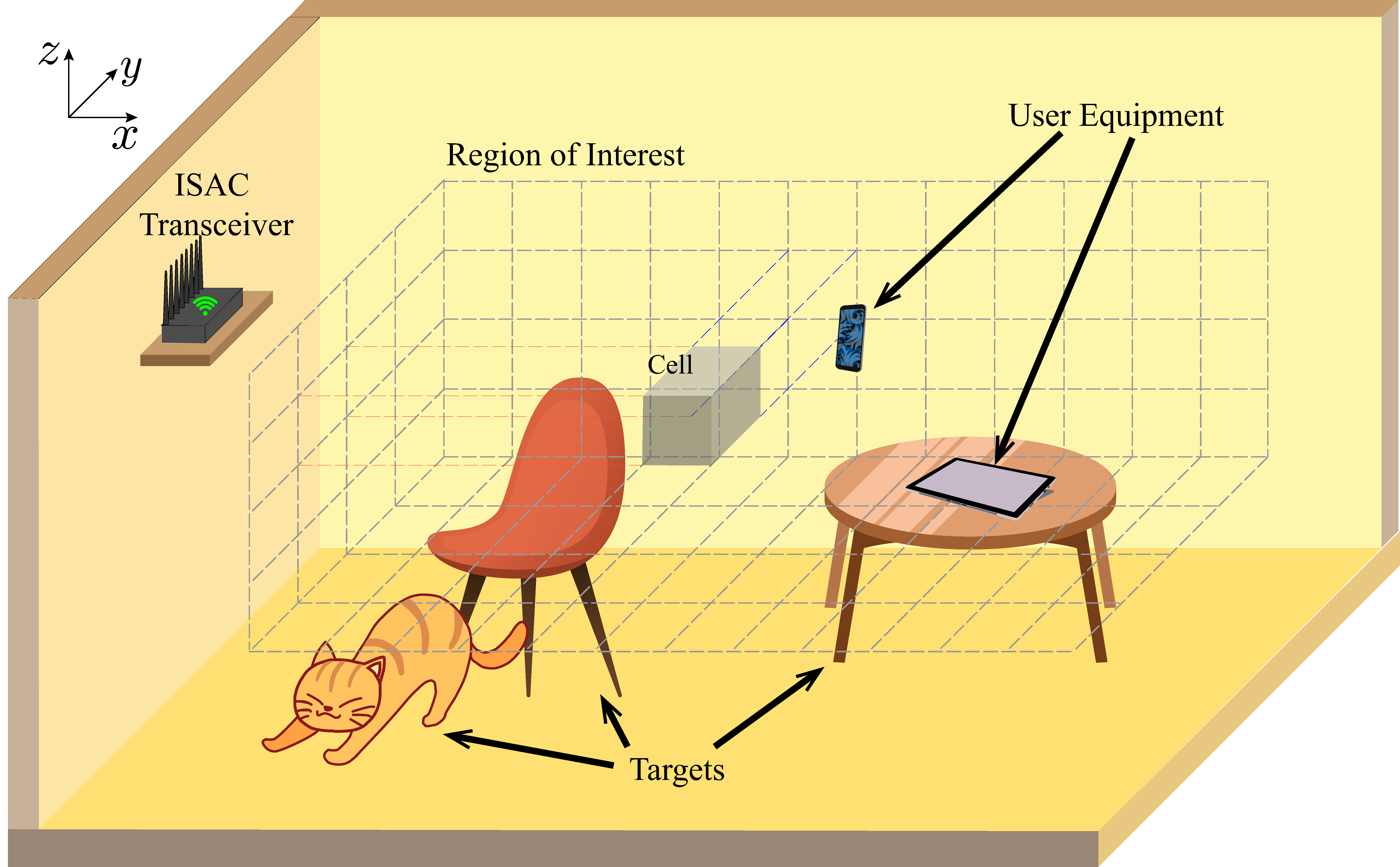}
		\caption{Example of an indoor scenario in which a neural ISAC transceiver simultaneously senses a given region of interest while maintaining one or several communication links.}
		\label{fig:isac_indoor_scenarios}
	\end{figure}

	\subsection{Contributions}
	This paper studies \ac{NN}-based \ac{ISAC} (neural \ac{ISAC}, for short).
	The fundamental questions related to neural \ac{ISAC} are as follows:
	(i) How to represent the spatial information of a scene?
	(ii) How to perform sensing without modifying the existing communication network?
	(iii) How to extract effective features for \ac{NN} training in neural \ac{ISAC}?
	This paper answers these questions with a novel paradigm for neural \ac{ISAC} built upon existing \ac{MIMO} and \ac{OFDM} communication networks.
	Furthermore, we advocate for the integration of sensing functionality in existing communication systems without any modification to the network.
	
	This paper proposes a novel signal processing framework that effectively senses the physical environment by building upon an existing \ac{MIMO}-\ac{OFDM} communication downlink.
	An indoor example of the paradigm for neural \ac{ISAC}  is shown in Fig.~\ref{fig:isac_indoor_scenarios}. 
	The physical objects in this scene (table, chair, cat, etc.) become targets that can be sensed seamlessly with our proposed neural \ac{ISAC} system.
	The key contributions of the paper are summarized as follows:
	\begin{itemize}
		\item We introduce the concept of discrete map representations to describe scenes. We then formulate a task-agnostic sensing problem, namely, scene reconstruction under these representations,\footnote{
			The sensing task proposed in this paper is similar to the imaging task in \cite{MehSab:J22}, but we focus on \ac{DL} scenarios. Additionally, in some papers,  e.g., \cite{TorGueZhaGuiYanEldDar:J24}, the term imaging implies that the sensed channel is already known. To avoid such ambiguities, this paper uses the terms sensing or scene reconstruction instead of imaging.
		} into multilabel or multiclass classification.
		\item We establish a signal processing framework for scene reconstructions by reusing the \ac{MIMO}-\ac{OFDM} downlink. Specifically, we introduce Tikhonov regularization in channel estimation, and perform feature extraction fusing environmental prior knowledge.
		\item We conduct case studies to identify effective features and fusion techniques and quantify the sensing performance of the systems.
	\end{itemize}

	\subsection{Prior Art}
	In order to unify sensing and communication under a joint signal processing framework, some existing papers have explored orthogonal resource allocations of waveforms and waveform designs with higher integration. 
	Orthogonal resource allocation in \ac{ISAC} systems divides resources in time, frequency, or space, and assigns them to sensing and communication tasks separately \cite{NaoBazBomCha:J24}; this ensures low interference between sensing and communication signals, but also limits the potential benefits of integration. 
	Waveform designs can be categorized into communication-centric, sensing-centric, and joint approaches.  
	Communication- and sensing-centric approaches use waveforms originally designed for communication and sensing to perform sensing and communication tasks, respectively \cite{StuWie:11}; while joint waveform designs jointly optimize the sensing and communication performance by devising the \ac{MIMO} precoders \cite{LiuZhoMasLiLuoPet:J18, LiuLiuLiMasEld:J22, LuLiuDonXioXuLiuJin:J24}. Unified waveforms can achieve better fundamental trade-offs of ISAC systems. However, the specialized waveform design modifies the original communication links inevitably, which may lead to additional communication latency and hardware overhead.
	Therefore, it is more promising to apply communication-centric waveforms and integrate sensing into communication networks without modifying existing communication links or involving orthogonal resource allocation for sensing and communication. 
	In this paper, we focus on \ac{DL} ISAC systems with standard \ac{MIMO}-\ac{OFDM} communication networks. 
	
	Learning-based methods have shown to hold great promise in a range of sensing tasks in \ac{ISAC} systems \cite{MatHagKesLemWym:C23, MatHagKesMagWym:25, NaoBazBomCha:J24, JiaMaWeiFenZhaPen:J24, CheFenZhaGaoYuaYanZha:J24}. 
	These methods can be categorized into mainly two types: model-based learning and data-driven learning. 
	Model-based learning for \ac{ISAC} systems originates from classical algorithms of statistical inference and compressive sensing, such as message passing and \ac{OMP} algorithms \cite{MatHagKesLemWym:C23, MatHagKesMagWym:25, JiaMaWeiFenZhaPen:J24}. 
	For this reason, the methods based on model-based learning are generally more interpretable compared to those based on data-driven learning. 
	However, model-based learning also imposes implicit constraints on sensing tasks. 
	Such constraints limit most existing approaches of \ac{ISAC} using model-based learning to the estimation of several parameters \cite{MatHagKesLemWym:C23, MatHagKesMagWym:25, JiaMaWeiFenZhaPen:J24}, e.g., \ac{AoA}, \ac{AoD}, and range, which can lead to information loss. 
	In contrast, while some existing results on data-driven learning for \ac{ISAC} are also limited to parameter estimation of \ac{AoA}, \ac{AoD}, and range \cite{MatHagKesLemWym:C23, NaoBazBomCha:J24, MatHagKesMagWym:25}, the methods based on data-driven learning are not theoretically bound by these constraints and, thus, allow for more general sensing tasks. In particular, neural \ac{ISAC}, empowered by \acp{NN}, stands out as a strong candidate due to the model-fitting capability of  \acp{NN} \cite{Hor:J91, HanRol:J19, CheFenZhaGaoYuaYanZha:J24, NaoBazBomCha:J24}. In this paper, the proposed signal processing framework is based on data-driven learning and directly reconstructs the spatial information of scenes under discrete map representations.

	\subsection{Notation}
	Throughout this paper, scalars, vectors and matrices are displayed in boldface uppercase, boldface lowercase and italic, respectively. For example, a scalar, a vector, and a matrix are denoted by $x$, $\V x$, and $\M X$, respectively. Sets and tensors are denoted by calligraphic font and boldface calligraphic font, respectively. For example, a set and a tensor can be denoted by $\Set X$ and $\Tsr X$.
	Let $\M X^*$, $\M X^\nT$, $\M X^\nH$, and $\M X^\dagger$ denote the conjugate, transpose, conjugate transpose and pseudo-inverse of matrix $\M X$, respectively. 
	The $(i,j)$th element, $i$th row, $j$th column, and rank of a matrix $\M X$ are denoted by $\left[\M X\right]_{i,j}$, $\left[\M X\right]_{i,:}$, $\left[\M X\right]_{:, j}$, and $\mathrm{rank}\left(\M X\right)$ respectively. 
	The $i$th horizontal slice, $j$th lateral slice, and $k$th frontal slice of a \ac{3D} tensor $\Tsr X$ are denoted by $\left[\Tsr X\right]_{i,:,:}$, $\left[\Tsr X\right]_{:,j,:}$, and $\left[\Tsr X\right]_{:,:,k}$, respectively.
	The $m $-by-$n$ matrix of zeros (resp. ones) is denoted by $\V 0_{m\times n}$ (resp. $\V 1_{m\times n }$); when $n=1$, the $m$th dimensional vector of zeros (resp. ones) is simply denoted by $\V 0_m$ (resp. $\V 1_m$).
	The $m$-by-$m$ identity matrix is denoted by $\M I_m$: the subscript is removed when the dimension of the matrix is clear for the context.
	The $m$-dimensional unitary \ac{DFT} matrix is denoted by $\M F_m \in \mathbb{C} ^{m\times m}$.
	The operator $\left\Vert \cdot \right\Vert_{\rm F}$ takes the Frobenius norm of its argument. 
	The operator $\circ$ denotes the composition of two functions. For example, the composition of the functions $f(\cdot)$ and $g(\cdot)$ is denoted by $f \circ g$, i.e., $(f\circ g)(\cdot) =f(g(\cdot))$.
	The identity operator is denoted by $\op I_{\rm d}$.
	The operator $\times_n$ denotes the mode-$n$ tensor product \cite{Zha:B17}. The concatenation of the tensors $\Tsr A$ and $\Tsr B$ along the $n$th dimension is denoted by $\begin{bmatrix}
		\Tsr A & \Tsr B
	\end{bmatrix}_n$.
	The operators $\Re\{\cdot\}$ and $\Im\{\cdot\}$ take the real and imaginary part of their arguments; the imaginary unit is denoted by $\jmath$.
	The indicator function of a set $\Set X$ is denoted by $\indicator{\Set X}$, i.e., $\indicator{\Set X}(x) = 1$ if $x \in \Set X$ and $\indicator{\Set X}(x) = 0$ if $x \notin \Set X$.
	The operators $\vee$ and $\wedge$ denote the logical OR and AND operation, respectively, i.e., $x \vee y = 1$ if $x>0$ or $y>0$, and $x \vee y = 0$ if $x = y = 0$; $x \wedge y = 0$ if $x =0$ or $y=0$, and $x\wedge y = 1$ if $x, y>0$.
	The notation of important quantities is summarized in Table~\ref{tab:notation}.

	\begin{table*}
		
		\caption{Notation summary of important quantities.}
		\renewcommand{\arraystretch}{1.5}
		\begin{center}
			\resizebox{1\textwidth}{!}{\begin{tabular}{ p{1cm} p{8.5cm} | p{1cm} p{8.5cm} }
					\hline
					\bf Notation & \bf Definition & \bf Notation & \bf Definition \\
					\hline
					\rowcolor{blue!10!white}
					$\SceneColSet$ &Collection representing a scene composed of targets within the \ac{RoI}& 
					$\SceneColSet_j$  & Set corresponding to the $j$-th target in the scene \\
					$N_{\rm r}$& Number of \ac{Rx} antennas at the \ac{ISAC} transceiver&
					$N_{\rm t} $& Number of \ac{Tx} antennas at the \ac{ISAC} transceiver\\
					\rowcolor{blue!10!white}
					$W$& Number of \ac{OFDM} subcarriers& 
					$K$& Number of \ac{MIMO} datastreams\\
					$L$& Number of \ac{OFDM} symbols&
					$\LengthInfoBit$& Length of information bits transmitted in the communication link  
					
					\\
					\rowcolor{blue!10!white} 
					$\M H_w^{\rm c}$ & \ac{DL} communication channel matrix at the $w$th subcarrier& 
					$\M H_w^{\rm s}$ & \ac{DL} sensing channel matrix at the $w$th subcarrier \\
					$\M P_w$ & \ac{MIMO} precoding matrix designed for communication at the $w$th subcarrier& 
					$\M S_w$ & Transmitted \ac{OFDM} data matrix containing $L$ \ac{OFDM} symbols for $K$ single-antenna \acp{UE}\\
					
					\rowcolor{blue!10!white}
					$\V b^{(q)}$ & Binary information bits associated to the $q$th data sample&
					$\hat{\V b}^{(q)}$ & Detected binary information bits associated to the $q$th data sample\\
					$\V m^{(q)}$& The discrete map representation of the scene associated to the $q$th data sample&
					$\hat{\V m}^{(q)}$& Estimated discrete map representation of the scene associated to the $q$th data sample\\
					\rowcolor{blue!10!white}
					$\Tsr H^{{\rm c}(q)}$& Tensor of communication \ac{CSI} associated to the $q$th data sample & 
					$\Tsr H^{{\rm s}(q)}$& Tensor of sensing \ac{CSI} associated to the $q$th data sample  \\
					$\Tsr P^{(q)}$ & \ac{MIMO} precoding tensor associated to the $q$th data sample & 
					$\Tsr S^{(q)}$& Tensor of transmitted datastreams associated to the $q$th data sample \\
					
					\rowcolor{blue!10!white}
					$\Tsr F_{\rm bd}^{\rm s}$ & Beamspace and delay-domain feature tensor extracted from the estimated sensing \ac{CSI}& 
					$\Tsr F_{\rm d}^{\rm s}$& Direct feature tensor extracted from the estimated sensing \ac{CSI}\\
					$\Tsr F_{\rm e}^{\rm s}$& Feature tensor extracted from the estimated sensing \ac{CSI}& 
					$\Tsr F_{\rm e, ref}^{\rm s} $& Feature tensor extracted from the reference sensing \ac{CSI}\\
					
					\rowcolor{blue!10!white}
					$\lambda _{\rm reg}$& Tikhonov regularization factor for sensing channel estimation& 
					$g_{\rm fus}^{\rm SUB}$& Subtraction fusion mapping of the feature tensors respectively extracted from the estimated and reference \ac{CSI} \\
					$g_{\rm fus}^{\rm STA}$& Stacking fusion mapping of the feature tensors respectively extracted from the estimated and reference \ac{CSI}  & 
					$g_{\rm fus}^{\rm NOR} $& Fusion mapping that drops the feature tensor extracted from the reference \ac{CSI}\\
					\hline
			\end{tabular}}
			\vspace{-2mm}
		\end{center}
		\label{tab:notation}
	\end{table*}

	\subsection{Paper Outline}
	
	The rest of this paper is organized as follows.
	Section~\ref{sec:sysmod} presents the signal models for the considered \ac{DL} \ac{ISAC} systems.
	Section~\ref{sec:sc_nisac}  formulates the neural \ac{ISAC} problem.
	Section~\ref{sec:sp_flow} elaborates on the proposed signal processing framework for neural \ac{ISAC}.
	Section~\ref{sec:casestudy} conducts case studies to illustrate the concepts of scene reconstruction under discrete map representations, evaluates the performance of the neural \ac{ISAC} systems under the proposed framework, and discusses the primary limitation in practical deployments.
	Section~\ref{sec:conclusion} concludes.

	\section{Communication and Sensing Models}
	\label{sec:sysmod}
	
	This section details the signal models for the communication and sensing channels. Fig.~\ref{fig:neurisac_blockdiagram} shows the system diagram of the neural \ac{ISAC} system. The communication channel refers to the channel between the \ac{Tx} at the \ac{ISAC} transceiver and the \acp{UE}. The sensing channel refers to the channel between the \ac{Tx} and \ac{Rx} at the \ac{ISAC} transceiver, which contains the information for scene reconstruction. The \ac{Tx} and \ac{Rx} antennas at the \ac{ISAC} transceiver can be either separate or configured as a \ac{FD} entity.
	
	\subsection{Signal Model for Communication}\label{ssec:sglmod:comm}
	We consider a \ac{MIMO}-\ac{OFDM} \ac{Tx} with $N_{\rm t}$ antennas communicating with $K$ single-antenna \acp{UE}. For the $w$th subcarrier, the received $L$ \ac{OFDM} symbols $\M Y^{\rm c}_{w}\in \mathbb{C} ^{K \times L}$ at \acp{UE} in the frequency domain are given by
	\begin{IEEEeqnarray}{C"l}
		\M Y_w^{\rm c} = \M H_w^{\rm c} \M P_w \M S_w + \M Z_w^{\rm c}, & w \in \Set W \triangleq\{1, \dotsc, W\}\,,
	\end{IEEEeqnarray}
	where the matrix $\M H_w^{\rm c} \in \mathbb{C} ^{K \times N_{\rm t}}$ is the communication channel for the $w$th subcarrier between the \ac{Tx} and \acp{UE}; 
	the matrix $\M S_w \in \mathbb{C} ^{K \times L}$ contains the transmitted datastreams to the $K$ \acp{UE};
	the matrix $\M P_w \in \mathbb{C} ^{N_{\rm t} \times K}$ is the precoding matrix associated to the communication channel matrix $\M H_w^{\rm c}$, and its $k$th column $\left[\M P_w\right]_{:, k}$ is the precoding vector for the $k$th \ac{UE}; and
	the matrix $\M Z_w^{\rm c} \in \mathbb{C} ^{K \times L}$ presents complex \ac{AWGN}, with elements that are independent and identically distributed, following a zero-mean circularly-symmetric complex Gaussian distribution with a variance of~$N_0$ per entry. 
	
	\subsection{Signal Model for Sensing}
	\label{ssec:sglmod:sens}
	We consider a \ac{MIMO}-\ac{OFDM} transceiver with $N_{\rm t}$ \ac{Tx} antennas and $N_{\rm r}$ \ac{Rx} antennas. The $N_{\rm t}$ \ac{Tx} antennas communicate with $K$ single-antenna \acp{UE}.  The $N_{\rm r}$ \ac{Rx} antennas receive the signals excited by the transmitted signals for \ac{DL}  \ac{MU}-\ac{MIMO} communication, and the received $L$ \ac{OFDM} symbols $\M Y_w^{\rm s} \in \mathbb{C} ^{N_{\rm r} \times L}$ at the $w$th subcarrier are given by
	\begin{IEEEeqnarray}{rCl}
		\M Y_w^{\rm s} = \M H_w^{\rm s} \M P_w \M S_w + \M Z_w^{\rm s}, \qquad w \in \Set W\,.
		\label{eq:sys_model:sens:w}
	\end{IEEEeqnarray}
	where the matrix $\M H_w^{\rm s} \in \mathbb{C} ^{N_{\rm r}\times N_{\rm t}}$ is the sensing channel matrix at the $w$th subcarrier, and the matrix $\M Z_w^{\rm s} \in \mathbb{C} ^{N_{\rm r} \times L}$  presents complex \ac{AWGN}, with elements that are independent and identically distributed, following a zero-mean circularly-symmetric complex Gaussian distribution with a variance of $N_0$ per entry.
	
	For convenience in the following sections, the received sensing signal model in \eqref{eq:sys_model:sens:w} is expressed in tensor form, i.e.,
	\begin{IEEEeqnarray}{rCl}
		\Tsr Y^{\rm s} = \Tsr H^{\rm s} \times_3 \Tsr P \times_3 \Tsr S + \Tsr Z^{\rm s}\,,
		\label{eq:signal_model_sens:tensor}
	\end{IEEEeqnarray}
	where the received sensing signal tensor $\Tsr Y^{\rm s} \in \mathbb{C} ^{N_{\rm r} \times L \times W}$, the sensing channel tensor $\Tsr H^{\rm s} \in \mathbb{C} ^{N_{\rm r}\times N_{\rm t} \times W}$, the \ac{MIMO} precoding tensor $\Tsr P \in \mathbb{C} ^{N_{\rm t} \times K \times W}$, the transmitted datastream tensor $\Tsr S \in \mathbb{C} ^{K \times L \times W}$, and the noise tensor $\Tsr Z^{\rm s} \in \mathbb{C} ^{N_{\rm r}\times L \times W}$ are concatenated along the third dimension from the corresponding matrices indexed by $w \in \Set W$. 
	For example, the tensor $\Tsr Y^{\rm s}$ is specifically given by
	\begin{IEEEeqnarray}{rCl}
		\begin{bmatrix}
			\M Y_1^{\rm s}&\M Y_2^{\rm s}& \cdots & \M Y_W^{\rm s}
		\end{bmatrix}_3\,,
	\end{IEEEeqnarray}
	and the tensor product $\Tsr H^{\rm s}\times_3 \Tsr P \times_3 \Tsr S$ is equivalent to
	\begin{IEEEeqnarray}{rCl}
		\begin{bmatrix}
			\M H_1^{\rm s} \M P_1 \M S_1 & \M H_2^{\rm s} \M P_2 \M S_2 & \cdots & \M H_W^{\rm s} \M P_W \M S_W
		\end{bmatrix}_3\,.
	\end{IEEEeqnarray}

	\begin{remark} \label{rmk:fullduplex}
		The signal model of \ac{DL} sensing with \ac{MU}-\ac{MIMO} in this section is applicable to a \ac{FD} \ac{ISAC} transceiver. In this case, the number $N_{\rm t}$ of Tx antennas at the \ac{ISAC} transceiver is equal to the number $N_{\rm  r}$ of \ac{Rx} antennas. 
	\end{remark}
	
	\begin{figure*}[!t]
		\centering
		\includegraphics[width=0.99\textwidth]{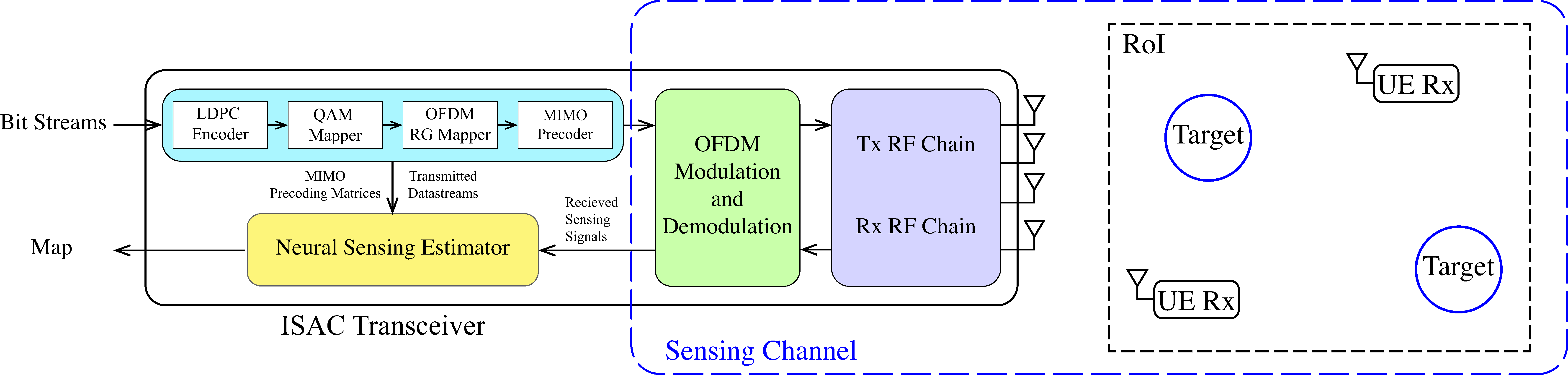}
		\caption{System diagram of the neural \ac{ISAC} system. The \ac{ISAC} transceiver performs \ac{DL} communication with multiple \acp{UE}, while simultaneously extracting scene information from the received sensing signals via the sensing channel, the \ac{MIMO} precoding matrices, and the transmitted datastreams.}
		\label{fig:neurisac_blockdiagram}
	\end{figure*}
	
	\section{Standard-Compatible Neural ISAC}\label{sec:sc_nisac}
	This section first proposes discrete map representations for spatial sensing. Then, the general learning problem of neural \ac{ISAC} systems is formulated mathematically and further decomposed into two problems, namely, standard-compatible \ac{MU}-\ac{MIMO}-\ac{OFDM} \ac{DL} communication and neural sensing under discrete map representations. Furthermore, the neural sensing problems under different discrete map representations are formulated in detail as multiclass classification and multilabel classification, respectively.
	\subsection{Discrete Map Representations}\label{ssec:dmr}
	
	\begin{figure*}[!t]
		\centering
		\begin{subfigure}{0.6\columnwidth}
			\centering
			\includegraphics[width = \textwidth]{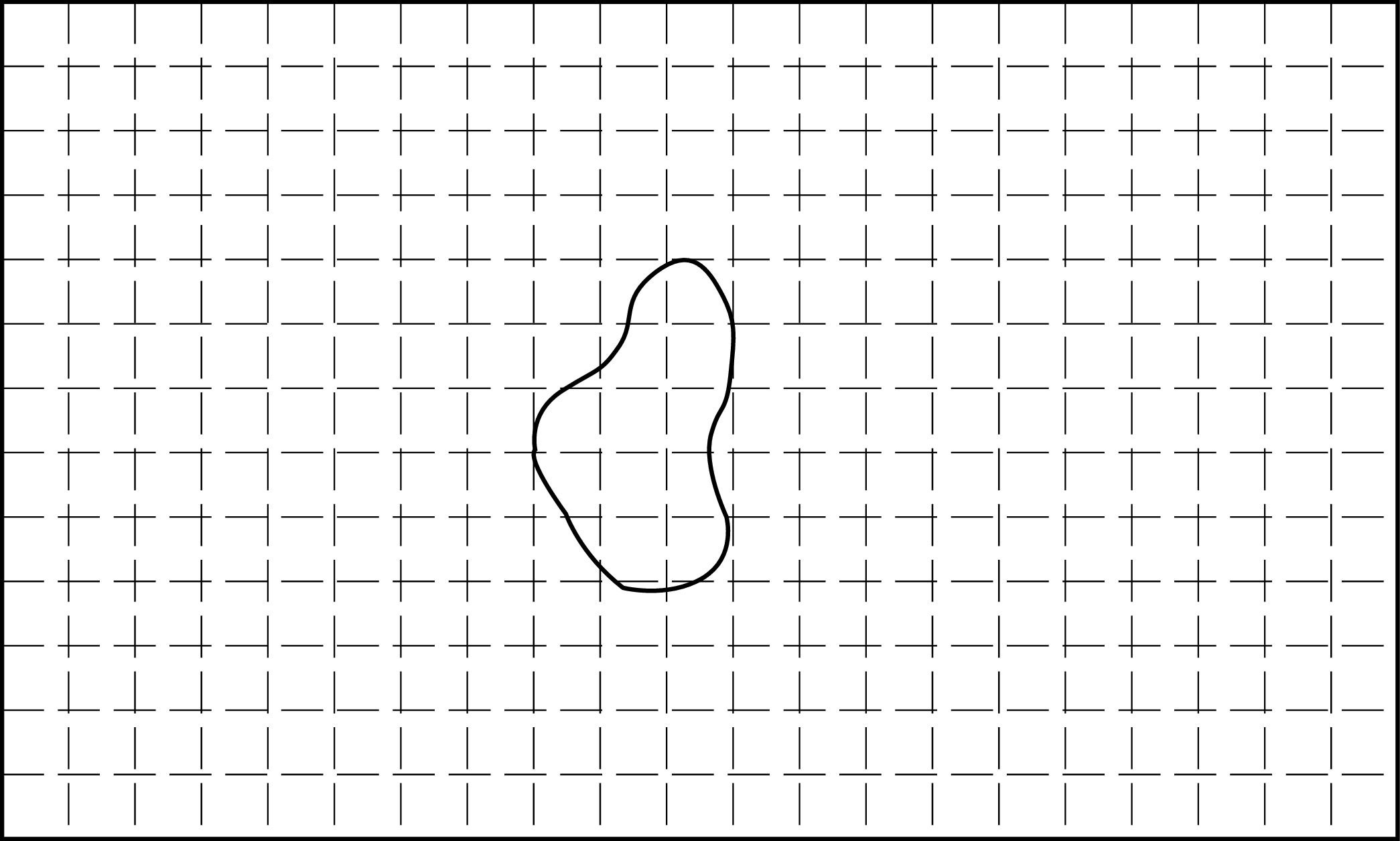}
			\caption{}
			\label{fig:ground_truth}
		\end{subfigure}
		\hfill
		\begin{subfigure}{0.6\columnwidth}
			\centering
			\includegraphics[width = \textwidth]{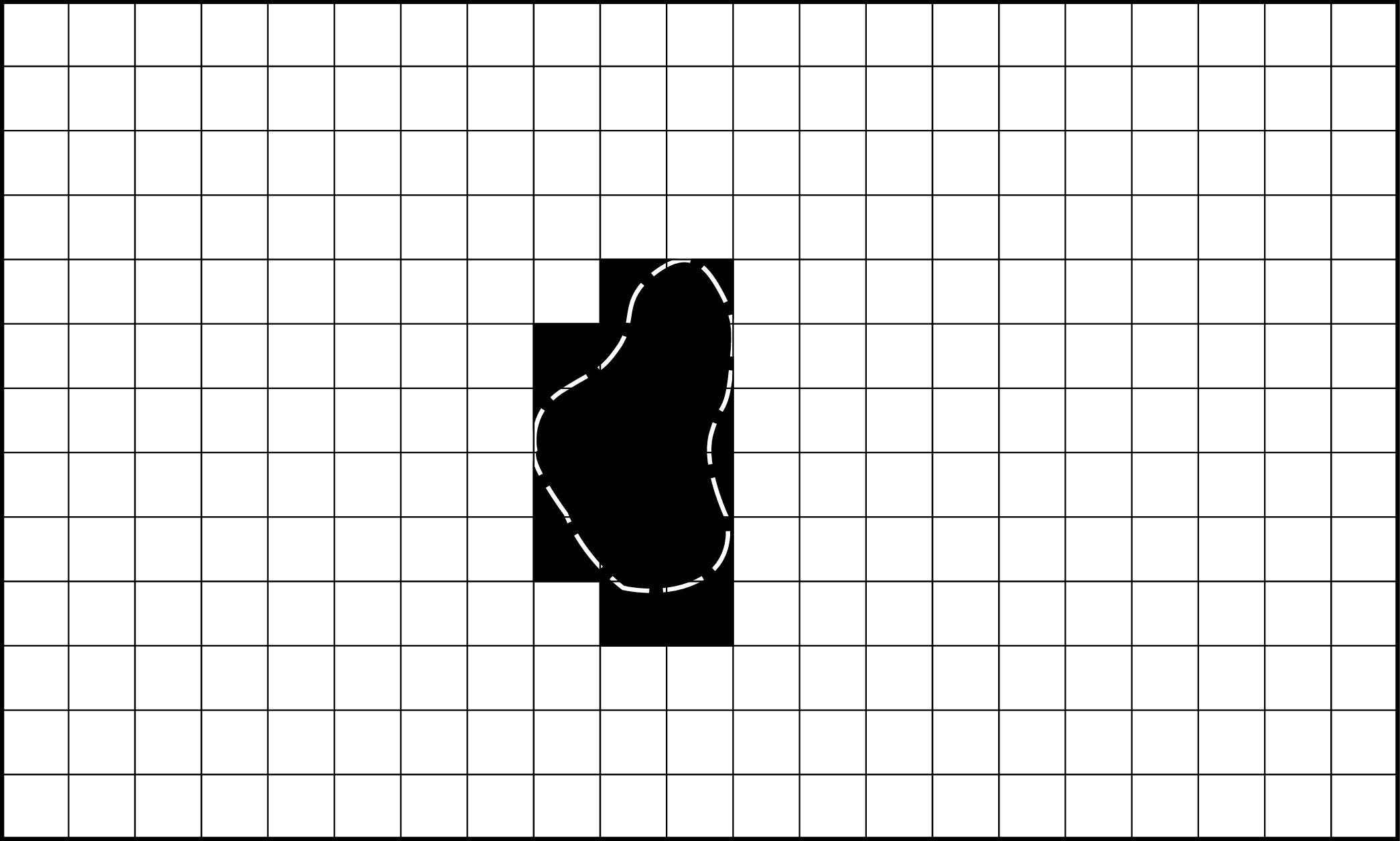}
			\caption{}
			\label{fig:hard_map}
		\end{subfigure}
		\hfill
		\begin{subfigure}{0.6\columnwidth}
			\centering
			\includegraphics[width = \textwidth]{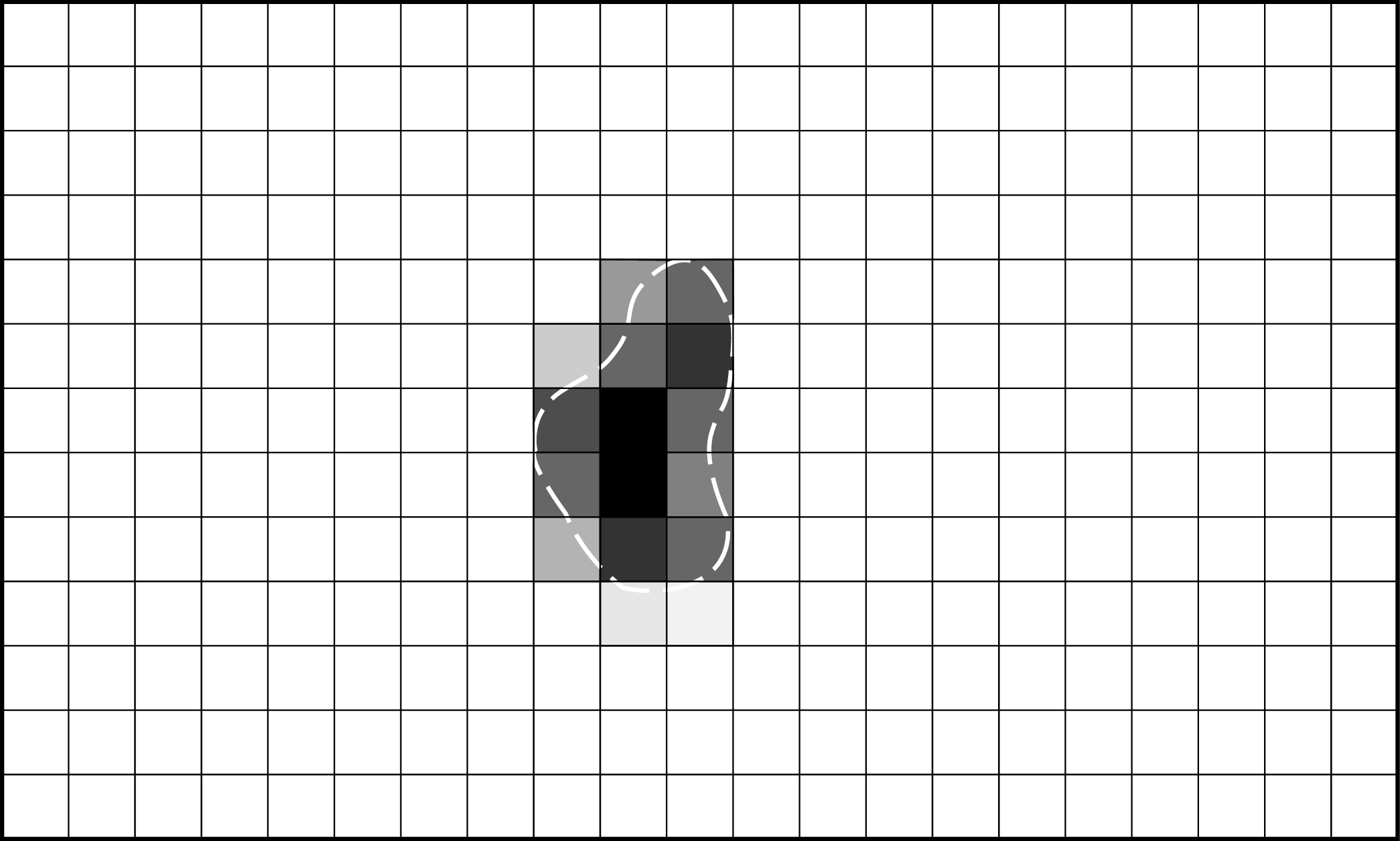}
			\caption{}
			\label{fig:soft_map}
		\end{subfigure}
		
		\caption{Illustration for map representations of scenes under a specific map grip. The gray levels of the cells represent the map values $m_i \in [0, 1]$. The white color corresponds to $0$, and the black color corresponds to $1$. (a) The considered scene with a map grid. (b) The hard map representation of the scene. (c) The soft map representation of the scene.}
		\label{fig:map_representation}
		
	\end{figure*}

	To describe the scene information in the \ac{RoI}, the map grid and its cells are defined as in Definition~\ref{def:map_grid_cell}.
	
	\begin{definition}[Map grid and cells] \label{def:map_grid_cell}
		The map grid $\Set G$ is defined as a collection of $I$ cells containing the \ac{RoI} $\Set R$, given by
		\begin{IEEEeqnarray}{rCl}
			\Set G \triangleq \left\{
			\Set G_i: \Set G_i \neq \varnothing, i \in \Set I = \{1, \dotsc, I\}
			\right\}\,,
		\end{IEEEeqnarray}
		where the cells $\Set G_i$, $i\in \Set I$ satisfy that 
		\begin{IEEEeqnarray}{r"C"l}
			\bigcup_{i \in \Set I} \Set G_i =\Set R&\text{and}&\bigcup_{\substack{i,i'\in \Set I \\ i \neq i'}} \left(\Set G_i \cap \Set G_{i'}\right) = \varnothing\,.
		\end{IEEEeqnarray}
	\end{definition}
	
	Based on a map grip $\Set G$, the hard map representation can be introduced as in Definition~\ref{def:hard_map}.
	
	\begin{definition}[Hard map representation]\label{def:hard_map}
		The hard map representation $\V m$ is defined as
		\begin{IEEEeqnarray}{C"l}
			m_i \triangleq \bigvee_{j \in \Set J} \bigvee_{\V p \in \SceneColSet_j}\indicator{\Set G_i} (\V p)\,,& i \in \Set I\,,
		\end{IEEEeqnarray}
		where the collection $\SceneColSet$ represents the targets and their shapes in the \ac{RoI} $\Set R$ of the scene,  given by
		\begin{IEEEeqnarray}{rCl}
			\SceneColSet = \left\{\SceneColSet_j\subseteq \Set R: \SceneColSet_j \neq \varnothing, j\in\Set J=\{1, \dotsc, J\}\right \} \label{eq:def_hard_map:scene_coll}
		\end{IEEEeqnarray}
		with $\SceneColSet_j \cap \SceneColSet_{j'} = \varnothing$ for any distinct $j,\, j' \in \Set J$.\footnote{This implies that the target shapes have no physical overlap.}
	\end{definition}
	
	\begin{remark}
		An example of a \ac{2D} hard map representation of a scene is depicted in Fig.~\ref{fig:map_representation}, where Fig.~\ref{fig:ground_truth} shows the considered scene with a map grid, and Fig.~\ref{fig:hard_map} shows its corresponding hard map representation. An intuitive interpretation of Definition~\ref{def:hard_map} is to assign each cell in the map grid the value $1$ if it overlaps with the scene, and $0$ otherwise.
		For the single-target case, if the shape information of the target is ignored, then the \ac{2D} hard map representation coincides with the probability map representation in equation (11) of \cite{GonLeiLanHuaStu:J22} with a sufficiently small $\varepsilon$. 
		In this paper, we call the hard map representation of the case with only one occupied cell a ``probability map representation.'' 
	\end{remark}
	
	Furthermore, the definition of soft map representation is presented in Definition~\ref{def:soft_map}.
	\begin{definition}[Soft map representation]\label{def:soft_map}
		The soft map representation $\tilde{\V m}$ is defined as
		\begin{IEEEeqnarray}{C"l}
			\widetilde{m}_i \triangleq \mu_{\SceneColSet}(\Set G_i), & i \in \Set I\,,
		\end{IEEEeqnarray}
		where the measure $\mu_{\SceneColSet}$ of set $\Set S$ is defined as
		\begin{IEEEeqnarray}{rCl}
			\mu_{\SceneColSet}(\Set S) \triangleq \left\{\begin{array}{ll}
				0																										 &\text{if } \mu(\Set S) = 0\\
				\frac{1}{\mu(\Set S)}\sum_{j \in \Set J} \mu(\Set S \cap \SceneColSet_j) & \text{otherwise},
			\end{array}\right.
		\end{IEEEeqnarray}
		in which the measure $\mu$ yields the volume  or area  of the set, i.e.,
		\begin{IEEEeqnarray}{rCl}
			\mu(\Set S) \triangleq 
			\left\{\begin{array}{ll}
				\lambda_3 (\Set S)																								 &\text{if } \lambda_3(\Set S) > 0\\
				H^2(\Set S) & \text{if } \lambda _3(\Set S)=0,
			\end{array}\right.
		\end{IEEEeqnarray}
		with the \ac{3D} Lebesgue measure $\lambda_3$ and the \ac{2D} Hausdorff measure $H^2$.
	\end{definition}
	
	\begin{remark}
		An example of the \ac{2D} soft map representation of the scene in Fig.~\ref{fig:ground_truth} is also included in Fig.~\ref{fig:map_representation}, where Fig.~\ref{fig:soft_map} shows the corresponding soft map representation. An intuitive interpretation of Definition~\ref{def:soft_map} is to assign each cell in the map grid the proportion of the cell that is occupied by the scene.
		Soft map representations show the confidence level of each cell to be occupied by targets on the map grid. 
		Under the same map resolution, soft map representations provide more information on the boundary of the targets compared to hard map representations, which is critical for spatial sensing.
	\end{remark}
	
	\subsection{General Problem Formulation of Neural \ac{ISAC}}
	To integrate sensing functionality into standard \ac{MIMO}-\ac{OFDM} networks, the neural \ac{ISAC} problem is transformed into a \ac{NN} training problem for a neural sensing estimator at the \ac{ISAC} transceiver.  
	To illustrate this transformation, we consider the supervised learning problem of the considered \ac{DL} neural \ac{ISAC} system for the \ac{RoI} $\Set R$ over the dataset 
	\begin{IEEEeqnarray}{rCl}
		\Set D = \left\{\left(\V b^{(q)}, \SceneColSet^{(q)}\right) : \V b^{(q)} \in \left\{0,1\right\}^{\LengthInfoBit}, \SceneColSet^{(q)} \subset 2^{\Set R},  q \in \Set Q\right\}\!,  \nonumber\\ 
		\label{eq:dataset:abstract}
	\end{IEEEeqnarray}
	and the learning problem can be generally formulated as follows:
	\begin{IEEEeqnarray}{r'l'l}
		\mathscr{P}_0: &\mathop{\text{minimize}}\limits_{\V \theta  \in \Set \Theta}&
		\eta \E{\frac{1}{\left|\Set Q\right|}\sum_{q \in \Set Q}L^{\rm comm}(\V b^{(q)}, \hat{\V b}^{(q)})} \\
		&&
		+ (1- \eta) \E{\frac{1}{\left|\Set Q\right|}\sum_{q \in \Set Q}L^{\rm sens}(\Set  \Omega^{(q)}, \hat{\V m}^{(q)})} \nonumber\\
		&\text{subject to}& \begin{pmatrix}
			\hat{\V b}^{(q)}\\ \hat{\V  m}^{(q)}
		\end{pmatrix}= f_{\V \theta }\begin{pmatrix}
			\V b^{(q)}\\ \Set  \Omega^{(q)}
		\end{pmatrix}\,,\quad \forall  q \in \Set Q.
	\end{IEEEeqnarray}
	\noindent Here, the vector $\V \theta $ is the learnable parameter vector of the entire  neural \ac{ISAC} system $f_{\V \theta }\left(\cdot, \cdot\right)$ and its parameter space is denoted by $\Set \Theta$,\footnote{If there is no learnable parameter vector in the  \ac{ISAC} system, the learnable parameter vector is void.}
	the expectation is taken over the noise in the communication and sensing channels,
	$\eta$ is the trade-off factor of sensing and communication;
	vectors $\V b^{(q)}, \hat{\V b}^{(q)}\in \{0, 1\}^{\LengthInfoBit}$ are the transmitted information bits and their estimates associated to the $q$th data sample;
	collection $\SceneColSet^{(q)}$ is the set of the spaces occupied by targets in the RoI of the scene;
	the vector $\hat{\V m}^{(q)} \in[0,1]^{\left|\Set I\right|}$ is the discrete map representation of the targets distributed over the considered map grid $\Set G = \{\Set G_i: i\in \Set I\}$;
	the functions $L^{\rm comm}(\cdot)$ and $L^{\rm sens}(\cdot)$ are loss functions for communication and sensing tasks, respectively.
	
	\begin{remark}\label{rmk:datapoint:input_eq_label}
		For problem $\mathscr{P}_0$, a data point $\left(\V b^{(q)}, \SceneColSet^{(q)}\right)$, $q \in \Set Q$, serves as both the input and the label for the output of the neural \ac{ISAC} system.
		During training, the vector $\V b^{(q)}$ is a length-$L_{\rm info}$ sequence of information bits, randomly generated from a binary source, and remains constant over time. We assume that, during the period from when $\V b^{(q)}$ is transmitted by the \ac{Tx} of the \ac{ISAC} transceiver to when the corresponding \ac{OFDM} signal is received by the \ac{Rx}, the scene $\Set \Omega^{(q)}$ undergoes no spatial displacement.
	\end{remark}
	
	\begin{remark}
		Under the proposed framework, the sensing task in ISAC is to reconstruct the scene in the \ac{RoI} with a specific resolution. The parameters estimated in conventional sensing tasks, e.g., \ac{AoA}, \ac{AoD}, and range, can be acquired from the estimated scene in discrete map representations.
	\end{remark}
	
	The \ac{DL}  \ac{ISAC} system $f_{\V \theta }$ can be decomposed into a combination of subsystems as shown in Fig.~\ref{fig:neurisac_blockdiagram}. Specifically, the decomposition can be expressed as
	\begin{IEEEeqnarray}{rCl}
		f_{\V \theta }\begin{pmatrix}
			\V b\\ \SceneColSet
		\end{pmatrix}= 
		\begin{pmatrix}
			g_{\V \theta_{\rm c}^{\rm r}} \circ h_{\rm c}\left(\SceneColSet, \cdot\right) \circ \tilde{g}_{\V \theta^{\rm t}} 
			\\
			g_{\V \theta _{\rm s}^{\rm r}}\left(\Tsr P_{\V \theta^{\rm t}}, \cdot, \cdot\right) \circ 
			\begin{pmatrix}
				\op I_{\rm d}
				\\
				h_{\rm s}(\SceneColSet, \cdot) \circ \tilde{g}_{\V \theta ^{\rm t}}
			\end{pmatrix}
		\end{pmatrix}  \circ g_{\rm c}^{\rm t}(\V b)\nonumber\\
	\end{IEEEeqnarray}
	where 
	the learnable parameter vector $\V \theta \in\Set \Theta $ of the \ac{DL}  \ac{ISAC} system is partitioned as
	\begin{IEEEeqnarray}{rCl}
		\V \theta  = \begin{bmatrix}
			\left(\V \theta _{\rm c}^{\rm r}\right)^\nT & 
			\left({\V \theta _{\rm s}^{\rm r}}\right)^\nT & \left(\V \theta ^{\rm t}\right)^\nT
		\end{bmatrix}^\nT\,;
	\end{IEEEeqnarray}
	the subsystem 
	$g_{\V \theta _{\rm c}^{\rm r}}: \mathbb{C} ^{K \times L \times W} \rightarrow \{0, 1\}^{\LengthInfoBit}$ 
	is the multiple \acp{UE} with learnable parameter vector $\V \theta _{\rm c}^{\rm r} \in \Set \Theta _{\rm c}^{\rm r}$; 
	the subsystem 
	$g_{\V \theta _{\rm s}^{\rm r}}: 
	\mathbb{C}^{N_{\rm t} \times K \times W} \times
	\mathbb{C} ^{K \times L\times W}\times
	\mathbb{C} ^{N_{\rm r} \times L \times W}
	\rightarrow{[0,1]^{\left|\Set I\right|}}$ 
	is the neural sensing estimator with learnable parameter vector $\V \theta _{\rm s}^{\rm r} \in \Set \Theta_{\rm s}^{\rm r}$ at the \ac{ISAC} transceiver, whose first input is a linear \ac{MIMO} precoding tensor $\Tsr P_{\V \theta^{\rm t}}$ associated to learnable parameter vector $\V \theta^{\rm t}$; 
	the subsystems
	$h_{\rm c}(\SceneColSet,\cdot): \mathbb{C} ^{N_{\rm t} \times L \times W} \rightarrow \mathbb{C} ^{K \times L \times W}$ and $h_{\rm s}(\SceneColSet, \cdot): \mathbb{C} ^{N_{\rm t} \times L \times W} \rightarrow \mathbb{C} ^{N_{\rm r} \times L \times W}$ are the communication and sensing channels, respectively, within the \ac{RoI} of the scene determined by the collection $\SceneColSet$, in which $\SceneColSet$ has a form similar to \eqref{eq:def_hard_map:scene_coll};
	the subsystem  $\tilde{g}_{\V \theta^{\rm t}}: \mathbb{C} ^{K \times L \times W} \rightarrow \mathbb{C} ^{N_{\rm t} \times L \times W}$ is the linear \ac{MIMO} precoder; and
	the subsystem $g^{\rm t}_{\rm c}: \{0,1\}^{\LengthInfoBit} \rightarrow \mathbb{C} ^{K \times L \times W}$ maps the binary source to the \ac{OFDM} \ac{RG}, composed of the \ac{LDPC} encoder, \ac{QAM} mapper, and \ac{OFDM} \ac{RG} mapper.

	For seamless integration of sensing into existing communication networks, the \ac{Tx} at the \ac{ISAC} transceiver is a standard \ac{MIMO}-\ac{OFDM} \ac{Tx} and, thus, the learnable parameter vector $\V \theta^{\rm t}$ of the \ac{Tx} at the \ac{ISAC} transceiver is void. Specifically, the \ac{DL}  \ac{ISAC} system $f_{\V \theta }$ can be rewritten as
	\begin{IEEEeqnarray}{rCl}
		\begin{pmatrix}
			\hat{\V b} \\ \hat{\V m}
		\end{pmatrix} = f_{\V \theta }\begin{pmatrix}
			\V b\\ \SceneColSet
		\end{pmatrix} 
		=
		\begin{pmatrix}
			g_{\V \theta _{\rm c}^{\rm r}} \circ h_{\rm c} \left(\SceneColSet, \cdot\right)\circ \tilde{g}\\
			g_{\V \theta _{\rm s}^{\rm r}}\left(\Tsr P, \cdot, \cdot\right)
			\circ
			\begin{pmatrix}
				\op I_{\rm d} \\ h_{\rm s}\left(\SceneColSet, \cdot\right) \circ \tilde{g}
			\end{pmatrix}
		\end{pmatrix}
		\circ g_{\rm c}^{\rm t}\left(\V b\right)\!,\nonumber\\
	\end{IEEEeqnarray}
	where the linear \ac{MIMO} precoder $\tilde{g}_{\V \theta^{\rm t}}$ degenerates to $\tilde{g}$ without any learnable parameter, and the corresponding precoding tensor is denoted by $\Tsr P$.
	In other words,
	the \ac{DL}  \ac{ISAC} system $f_{\V \theta }$ can be decomposed into two independently trainable subsystems $f^{\rm c}_{\V \theta _{\rm c}^{\rm r}}$ and $f^{\rm s}_{\V \theta _{\rm s}^{\rm r}}$, i.e.,
	\begin{IEEEeqnarray}{rCl}
		\hat{\V b} &=& f^{\rm c}_{\V \theta _{\rm c}^{\rm r}}\begin{pmatrix}
			\V b\\ \SceneColSet
		\end{pmatrix}
		= g_{\V \theta _{\rm c}^{\rm r}} \circ h_{\rm c}\left(\SceneColSet, \cdot\right) \circ \tilde{g} \circ g_{\rm c}^{\rm t}\left(\V b\right)\\
		\hat{\V m} &=& f^{\rm s}_{\V \theta _{\rm s}^{\rm r}}\begin{pmatrix}
			\V b\\ \SceneColSet
		\end{pmatrix} 
		=  g_{\V \theta _{\rm s}^{\rm r}} \left(\Tsr P, \cdot, \cdot\right) \circ
		\begin{pmatrix}
			\op I_{\rm d} \\ h_{\rm s}\left(\SceneColSet, \cdot \right) \circ \tilde{g}
		\end{pmatrix}
		\circ g_{\rm c}^{\rm t} \left(\V b\right)\,. \IEEEeqnarraynumspace
	\end{IEEEeqnarray}
	Therefore, the problem $\mathscr{P}_0$ can be decomposed into two subproblems $\mathscr{P}_1$ and $\mathscr{P}_2$, i.e.,
	\begin{IEEEeqnarray}{r'l'l}
		\mathscr{P}_1: &\mathop{\text{minimize}}\limits_{\V \theta_{\rm c}^{\rm r}  \in \Set \Theta_{\rm c}^{\rm r}}&
		\E{\frac{1}{\left|\Set Q\right|} \sum_{q\in\Set Q}L^{\rm comm}(\V b^{(q)}, \hat{\V b}^{(q)})}\\
		&\text{subject to}& \hat{\V b}^{(q)} = f_{\V \theta_{\rm c}^{\rm r} }^{\rm c}\begin{pmatrix}
			\V b^{(q)} \\ \SceneColSet^{(q)}
		\end{pmatrix}, \quad \forall q \in \Set Q
		\\
		\mathscr{P}_2: &\mathop{\text{minimize}}
		\limits_{\V \theta_{\rm s}^{\rm r}  \in \Set \Theta_{\rm s}^{\rm r}}
		&
		\E{\frac{1}{\left|\Set Q\right|}\sum_{q \in \Set Q}L^{\rm sens}(\Set  \Omega^{(q)}, \hat{\V m}^{(q)})}
		\\
		&\text{subject to}&  
		\hat{\V m}^{(q)} = f_{\V \theta _{\rm s}^{\rm r}}^{\rm s}\begin{pmatrix}
			\V b^{(q)}\\\SceneColSet^{(q)}
		\end{pmatrix}, \quad \forall q \in \Set Q\,.
	\end{IEEEeqnarray}
	\begin{remark}
		By removing the learnable parameter vector $\V \theta ^{\rm t}$ in the \ac{Tx} of the \ac{ISAC} transceiver, the trade-off coefficient $\eta$ in the general problem $\mathscr{P}_0$ can be eliminated, and the resulting subproblems $\mathscr{P}_1$ and $\mathscr{P}_2$ can be optimized separately. The subproblem $\mathscr{P}_1$ is a learning-based optimization problem for the neural \acp{Rx} of the \acp{UE}. 
		The subproblem $\mathscr{P}_2$ is a learning-based optimization problem for the neural sensing estimator at the \ac{ISAC} transceiver. 
		For a standard \ac{MIMO}-\ac{OFDM} network, since there are no learnable parameters in the communication link, the subproblem $\mathscr{P}_1$ does not need to be solved.
		Therefore, the solution to subproblem $\mathscr{P}_2$ enables the coexistence of sensing functionality with any standard \ac{MIMO}-\ac{OFDM} communication network.
	\end{remark}
	
	\subsection{Supervised Learning for Neural ISAC}\label{ssec:supvised_learning_nisac}
	Similar to the feature of the problem $\mathscr{P}_0$ mentioned in Remark~\ref{rmk:datapoint:input_eq_label}, for the problem $\mathscr{P}_2$, the collection $\SceneColSet$ that describes the scene serves as both the input and the label of the system $f_{\V \theta _{\rm s}^{\rm r}}^{\rm s}$. In practical implementations, one uses a map converter to map the collection $\SceneColSet$ into a discrete map representation as the label. We map the collection $\SceneColSet$ to the \acp{CIR} of both the communication and sensing channels with the NVIDIA Sionna platform \cite{HoyCamAouVemBinMarKel:22, HoyAouCamNimBinMarKel:23}.
	
	The Sionna scene for \ac{CIR} generation is first configured by the target positions generated by a random target setting generator, a Sionna scene seed including the material properties and shape information of the targets and the non-target environment,\footnote{In our implementation, the Sionna scene seed is generated by Blender and exported as an XML file.} and the antenna settings of the \ac{ISAC} transceiver and \acp{UE}. Since the Sionna platform limits that in one Sionna scene, the pattern of \acp{Rx} must be same, the \acp{CIR} for \acp{Rx} with different patterns cannot be generated in one Sionna scene. Therefore, for the case where the \ac{Rx} at the \ac{ISAC} transceiver and the \acp{Rx} of the \acp{UE} have different patterns, sensing and communication \acp{CIR} need to be generated in different Sionna scenes, i.e., sensing Sionna scene and communication Sionna scene. Then, the ray tracer generates paths for sensing and communication channels, and the solver computes the attenuations and delays along the generated paths. Finally, the discrete map label, sensing \acp{CIR}, and communication \acp{CIR} are collected as a datapoint.  
	
	The information bits for the communication task are generated by a binary source implicitly embedded into the neural \ac{ISAC} system.
	Then, the communication and sensing channels can be computed with the generated attenuations and delays.
	Therefore, the generated dataset $\tilde{\Set D}$ for practical implementations can be expressed equivalently as
	\begin{IEEEeqnarray*}{Cl}
		\tilde{\Set D} = \Big\{ \left(\V b^{(q)}, \Tsr H^{{\rm c}(q)}, \Tsr H^{{\rm s}(q)}, \V m^{(q)}\right): \hspace{-4 cm}& \\
		&
		\V b^{(q) } \in \{0,1\}^{\LengthInfoBit},\Tsr H^{{\rm c}(q)} \in \mathbb{C} ^{K \times N_{\rm t} \times W}, \IEEEyesnumber\label{eq:gen_dataset:eqv}\\
		&\Tsr H^{{\rm s}(q)} \in \mathbb{C} ^{N_{\rm r} \times N_{\rm t} \times W},\V m^{(q)} \in {[0,1]^{\times \left|\Set I\right|}}, q \in \Set Q\Big\}\!,
	\end{IEEEeqnarray*}
	where for the $q$th datapoint $\left(\V b^{(q)}, \Tsr H^{{\rm c}(q)}, \Tsr H^{{\rm s}(q)}, \V m^{(q)}\right)$, the discrete map representation $\V m^{(q)}$ serves as the label, and the remaining components $\V b^{(q)}$, $\Tsr H^{{\rm c}(q)}$, and $\Tsr H^{{\rm s}(q)}$ in the tuple are the inputs of the  neural \ac{ISAC} system.
	
	With the generated dataset $\tilde{\Set D}$ in \eqref{eq:gen_dataset:eqv}, the problem $\mathscr{P}_2$ can be rewritten as
	\begin{IEEEeqnarray}{rl'l}
		\mathscr{P}_3:\; &\mathop{\text{minimize}}
		\limits_{
			\substack{
				\V \theta_{\rm nn}  \in \Set \Theta_{\rm nn},\\
				\V \theta _{\rm f} \in \Set \Theta _{\rm f}
			}
		}
		&\E{
			\frac{1}{\left|\Set Q\right|}\sum_{q \in \Set Q}L^{\rm nn}(\V m^{(q)}, \hat{\V m}^{(q)})
		}
		\\
		&\text{subject to}&
		\hat{\V m}^{(q)} = g_{\V \theta _{\rm nn}}^{\rm nn}\left(\Tsr F^{(q)}\right)
		\label{eq:problem_3:condition:1}\\
		&&\Tsr F^{(q )}= g_{\V \theta _{\rm f}}^{\rm f} \left(\Tsr P^{(q)}, \Tsr S^{(q)}, \Tsr Y^{{\rm s}(q)}\right)
		\label{eq:problem_3:condition:2}\\
		&&\Tsr P^{(q)} = g_{\rm MIMO}\left({\Tsr H}^{{\rm c}{(q)}}\right)
		\label{eq:problem_3:condition:3}\\
		&&\Tsr Y^{{\rm s}(q)}= \Tsr H^{{\rm s}(q) }\times_3 \Tsr P^{(q)} \times_3 \Tsr S^{(q)} \nonumber\\
		&&\phantom{\text{$\Tsr Y^{{\rm s}(q)}=$}} +\> \Tsr Z^{{\rm s}(q)}
		\label{eq:problem_3:condition:4}\\
		&&\Tsr S^{(q)} = g_{\rm c}^{\rm t}(\V b^{(q)})
		\label{eq:problem_3:condition:5}
	\end{IEEEeqnarray}
	where 
	the learnable parameter vectors $\V \theta _{\rm nn} \in \Set \Theta_{\rm nn}$ and $\V \theta _{\rm f} \in \Set \Theta _{\rm f}$ partition the parameter vector $\V \theta _{\rm s}^{\rm r}$ of the problem $\mathscr{P}_2$;
	the function $L^{\rm nn}(\cdot, \cdot)$ is the loss function for \ac{NN} training, and its arguments are the discrete map representation of the scene $\SceneColSet$ and the estimated discrete map; 
	the \ac{NN} is denoted by $g_{\V \theta _{\rm nn}}^{\rm nn}$ and maps the extracted feature tensors to discrete maps;
	the feature extractor $g_{\V \theta _{\rm f}}^{\rm f}$ extracts effective features from the \ac{MIMO} precoding tensor, \ac{OFDM} \ac{RG} tensor, and the received sensing signal tensor;
	the mapping $g_{\rm MIMO}: \mathbb{C}^{K \times N_{\rm t} \times L} \rightarrow \mathbb{C}^{N_{\rm t} \times K \times L}$ maps the communication \ac{CSI} $\Tsr H^{{\rm c}(q)}$ to the \ac{MIMO} precoding tensor $\Tsr P^{(q)}$ to be applied before the sensing channel $\Tsr H^{{\rm s}(q)}$. 
	
	With different discrete map representations, the problem $\mathscr{P}_3$ can be converted to a multiclass or multilabel classification problem. 
	For the sensing task with probability map representations,  the labels $\V m^{(q)}$, $q \in \Set Q$ in the dataset are vectors where only one element is $1$, and the rest are $0$, where the elements of a label are in $[0,1]$, and their sum equals to $1$. These properties of the label vector can be used as constraints on the output vector of the neural \ac{ISAC} system to reduce the solution space.
	Therefore, the problem $\mathscr{P}_3$ can be converted into a multiclass classification problem $\mathscr{P}_3^{\rm mcc}$, given by
	\begin{IEEEeqnarray}{rC'l}
		\mathscr{P}_3^{\rm mcc}: &\mathop{\text{minimize}}
		\limits_{
			\substack{
				\V \theta_{\rm nn}  \in \Set \Theta_{\rm nn},\\
				\V \theta _{\rm f} \in \Set \Theta _{\rm f}
			}
		}
		&\E{
			\frac{1}{\left|\Set Q\right|}\sum_{q \in \Set Q}L_{\rm cce}(\V m^{(q)}, \hat{\V m}^{(q)})
		}
		\\
		&\text{subject to}&
		\text{conditions~\eqref{eq:problem_3:condition:1}--\eqref{eq:problem_3:condition:5}} \nonumber
		\\
		&& \sum_{i\in \Set I} \hat{m}^{(q)}_i = 1, \quad \forall q\in\Set Q\label{eq:p_3_mcc:sum}\\
		&&\hat{m}^{(q)}_i \in [0,1], \quad \forall q\in\Set Q,\, i \in \Set I\,.\label{eq:p_3_mcc:range}
	\end{IEEEeqnarray}
	Here, $L_{\rm cce}(\cdot, \cdot)$ is the \ac{CCE} loss, 
	the probability map labels $\V m^{(q)} \in \{0, 1\}^{\left|\Set I\right|}$, $q \in \Set Q$ are vectors in which only one entry is $1$ and the remaining entries are $0$. In the problem $\mathscr{P}_3^{\rm mcc}$, the predicted map can be interpreted as the probability of the target appearing in any of the given cells of the map grid.
	
	For the sensing task with hard or soft map representations, the labels $\V m^{(q)}$, $q \in \Set Q$ in the dataset are vectors whose elements are in $[0,1]$. This property of the label vector can be used as a constraint on the output vector of the neural \ac{ISAC} system to reduce the solution space.
	Therefore, the problem $\mathscr{P}_3$ can be converted into a multilabel classification problem,
	given by
	\begin{IEEEeqnarray}{rC'l}
		\mathscr{P}_3^{\rm mlc}: &\mathop{\text{minimize}}
		\limits_{
			\substack{
				\V \theta_{\rm nn}  \in \Set \Theta_{\rm nn},\\
				\V \theta _{\rm f} \in \Set \Theta _{\rm f}
			}
		}
		&\E{
			\frac{1}{\left|\Set Q\right|}\sum_{q \in \Set Q}L_{\rm bce}(\V m^{(q)}, \hat{\V m}^{(q)})
		}
		\\
		&\text{subject to}&
		\text{conditions~\eqref{eq:problem_3:condition:1}--\eqref{eq:problem_3:condition:5}}
		\\
		&& \hat{m}^{(q)}_i \in [0,1], \quad \forall q\in\Set Q,\, i \in \Set I\,.
	\end{IEEEeqnarray}
	Here, $L_{\rm bce}(\cdot, \cdot)$ is the \ac{BCE} loss.
	For the sensing task with hard map representations, the map representation labels $\V m^{(q)}$, $q \in \Set Q$ are vectors in the space $\{0,1\}^{\left|\Set I\right|}$. For the sensing task with soft map representations, the map representation labels $\V m^{(q)}$, $q \in \Set Q$ are vectors in the space $[0,1]^{\left|\Set I\right|}$. In the problem $\mathscr{P}_3^{\rm mlc}$, the predicted map can be interpreted as the target-to-cell volume or area ratio. 
	
	\begin{remark}\label{rmk:resolution} 
		Under discrete map representations, the classification problem type
		of the proposed scene reconstruction in the neural \ac{ISAC} system 
		is determined by both the number of targets and the map-grid cell size. 
		For example, in a single-target scene, if the map-grid cell size is smaller than the size of the target, the corresponding discrete map representation will have multiple non-zero entries, leading to the multilabel classification problem $\mathscr{P}_3^{\rm mlc}$. Conversely, in a multi-target scene, if the map-grid cell size is sufficiently large to encompass all targets within a single cell, the hard map representation will be reduced to the probability map representation, leading to the multiclass classification problem~$\mathscr{P}_3^{\rm mcc}$. 
	\end{remark}

	In the problem $\mathscr{P}_3$ and its variants $\mathscr{P}_3^{\rm mcc}$ and $\mathscr{P}_3^{\rm mlc}$, the neural sensing estimator $g_{\V \theta _{\rm s}^{\rm r}}$ is decomposed into 
	two subsystems, namely, the feature extractor $g_{\V \theta _{\rm f}}^{\rm f}$ and NN $g_{\V \theta _{\rm nn}}^{\rm nn}$. 
	The proposed signal processing flow of the neural sensing estimator first extracts features from wireless signals and precoding tensors, and then feeds them into the \ac{NN} for scene~reconstruction.
	
	\section{Signal Processing Flow of the Neural Sensing Estimator}\label{sec:sp_flow}
	
	\begin{figure*}[!t]
		\centering
		\includegraphics[width=0.9\textwidth]{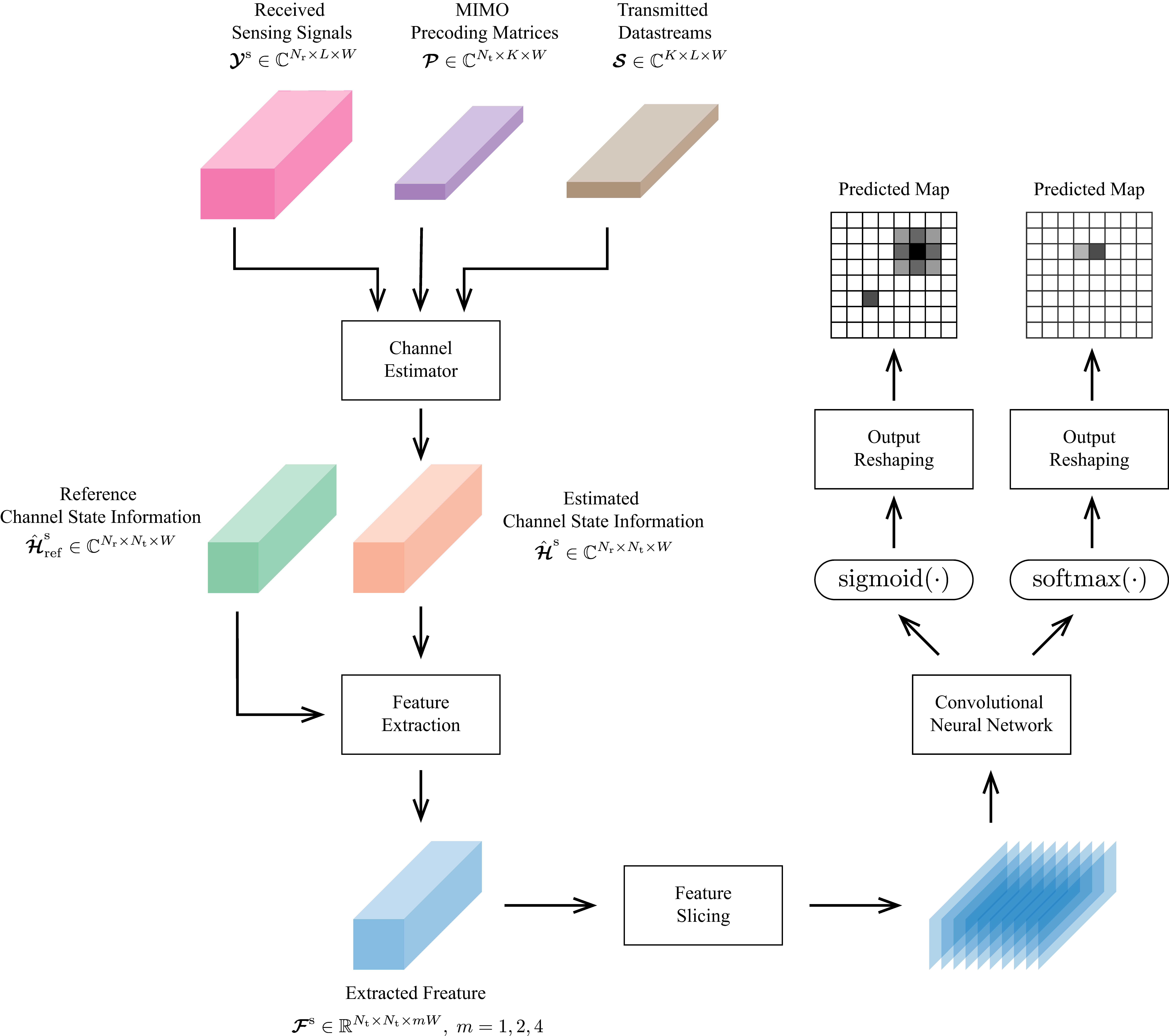}
		\caption{The signal processing flowchart of the proposed neural sensing estimator.}
		\label{fig:arch_neural_sensing}
	\end{figure*}
	
	This section presents the proposed signal processing flow of the neural sensing estimator, 
	which specifies the structure of the system $g_{\V \theta _{\rm nn}}^{\rm nn} \circ g_{\V \theta _{\rm f}}^{\rm f}$.
	As shown in Fig.~\ref{fig:arch_neural_sensing}, the sensing \ac{CSI} is first estimated from the \ac{MIMO} precoding tensor, received sensing signal tensor, and the transmitted \ac{OFDM} \ac{RG} tensor. Then, the effective features for \ac{NN} training are extracted from the estimated sensing \ac{CSI} and reference sensing \ac{CSI}. Finally, the extracted feature tensor is sliced to match the input shape of the \ac{NN}.

	\subsection{Sensing Channel Estimation}\label{ssec:sens_chest}
	From the signal model of \ac{DL} sensing with \ac{MU}-\ac{MIMO} in \eqref{eq:signal_model_sens:tensor}, it can be observed that the transmitted datastream tensor $\Tsr S$ and the noise tensor $\Tsr Z^{\rm s}$ are irrelevant to the scene information.
	In contrast, the \ac{MIMO} precoding tensor $\Tsr P$ derived from the estimated communication \ac{CSI} contains scene information of the communication channels.
	Nevertheless, it is important to note that the multipaths of the communication channels overlap with those of the sensing channels.  This is because any path radiated from the \ac{ISAC} transceiver can be scattered back through diffuse reflection. Therefore, the sensing \ac{CSI} contains effective information for scene reconstruction \cite{YanZhoLiu:13}. 
	
	\subsubsection{Problems in the Sensing Channel Estimation}
	
	The sensing channel estimation for the $w$th subcarrier can be formulated as 
	\begin{IEEEeqnarray}{r'l'l}
		\mathscr{P}_w^{\rm sce}: &\mathop{\text{minimize}}
		\limits_{\mathring{\M H}^{\rm s}_w  \in \mathbb{C}^{N_{\rm r}\times N_{\rm t}}}
		&{\left\Vert \M Y^{\rm s}_w - \mathring{\M H}^{\rm s}_w \M P_w \M S_w \right\Vert_{\rm F}}\,.
	\end{IEEEeqnarray}
	Different from communication channel estimation, the estimation of sensing channels in \ac{ISAC} systems faces two additional challenges: 
	(i) for each subcarrier, the rank of the transmitted datastream matrix is limited by the number of datastreams for communication, which may lead to a rank deficiency; and
	(ii) the attenuation in the sensing channel is more severe than that in the direct path of the communication channel, leading to a lower \ac{SNR} in received sensing signals.
	
	To further elaborate on the challenge (i), we examine \eqref{eq:sys_model:sens:w}, where it can be observed that
	\begin{IEEEeqnarray}{rCl}
		r_w\triangleq\mathrm{rank}\left(\M P_w \M S_w\right)  &\leq& \min\left\{\mathrm{rank}\left(\M P_w\right), \, \mathrm{rank}\left(\M S_w \right)\right\} \nonumber\\
		& \leq &K, \qquad \forall w \in \Set W\,.
	\end{IEEEeqnarray}
	\ac{MU}-\ac{MIMO} communication networks require that the number of datastreams is not more than the number of \ac{Tx} antennas for \ac{DL} communication, i.e., $K \leq N_{\rm t}$. Furthermore, if the number of datastreams is strictly smaller than the number of \ac{Tx} antennas, i.e., $K < N_{\rm t}$, then 
	\begin{IEEEeqnarray}{rCl}
		r_w \leq K < N_{\rm t}, \qquad \forall w \in \Set W\,.
	\end{IEEEeqnarray}
	In this case, the rank-deficiency in $\M P_w \M S_w$ changes \ac{LS} estimation in the sensing channel estimation. For the case where $N_{\rm t} \leq L$, since $\M P_w \M S_w \M S_w^\nH \M P_w^\nH \in \mathbb{C} ^{N_{\rm t} \times N_{\rm t}}$ is non-invertible, its right pseudo-inverse $\left(\M P_w \M S_w\right)^\dagger$ cannot be solely expressed by $\M P_w \M S_w$ as
	\begin{IEEEeqnarray}{rCl}
		\M S^\nH_w\M P^\nH_w \left(\M P_w \M S_w \M S_w^\nH \M P_w^\nH\right)^{-1}\,.
	\end{IEEEeqnarray}
	Instead, this requires a \ac{SVD} to obtain the pseudo-inverse. 
	Furthermore, the \ac{LS} estimation turns out to be biased, as summarized in Proposition~\ref{prop:biased_lse}.
	\begin{proposition}[Biased \ac{LS} estimation]\label{prop:biased_lse}
		The expectation of the \ac{LS} estimation is given by
		\begin{IEEEeqnarray}{rCl}
			\E{\hat{\M H}_w^{\rm s}} = \M H_w^{\rm s}\sum_{k=1}^{r_w}\left[\M U_w\right]_{:, k}\left[\M U_w\right]_{:, k}^\nH\,,
		\end{IEEEeqnarray}
		where the unitary matrix
		$\M U_w \in \mathbb{C} ^{N_{\rm t} \times N_{\rm t}}$ is defined by the \ac{SVD} of the matrix $\M P_w\M S_w$, i.e.,
		\begin{IEEEeqnarray}{rCl}
			\M P_w \M S_w = \M U_w \M \Sigma_w \M V_w^\nH
		\end{IEEEeqnarray} 
		in which the matrix $\M \Sigma_w \in \mathbb{R} ^{N_{\rm t} \times L}$ is a rectangular diagonal matrix with the first $r_w$ elements on the main diagonal being positive real numbers, and the remaining elements being zero;
		and the matrix $\M V_w \in \mathbb{C} ^{L \times L}$ is unitary.
	\end{proposition}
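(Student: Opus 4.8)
The plan is to write out the least-squares (LS) estimator in closed form using the SVD $\M P_w \M S_w = \M U_w \M \Sigma_w \M V_w^\nH$, substitute the signal model \eqref{eq:sys_model:sens:w}, and then take the expectation over the noise. First I would note that the minimizer of $\mathscr{P}_w^{\rm sce}$ is $\hat{\M H}_w^{\rm s} = \M Y_w^{\rm s} (\M P_w \M S_w)^\dagger$, since this is a standard linear least-squares problem in $\mathring{\M H}_w^{\rm s}$ (the Frobenius-norm objective decouples across rows of $\mathring{\M H}_w^{\rm s}$, each giving an ordinary LS problem whose minimum-norm solution uses the pseudo-inverse). Next I would plug in $\M Y_w^{\rm s} = \M H_w^{\rm s} \M P_w \M S_w + \M Z_w^{\rm s}$ to get
\begin{IEEEeqnarray}{rCl}
\hat{\M H}_w^{\rm s} = \M H_w^{\rm s} \M P_w \M S_w (\M P_w \M S_w)^\dagger + \M Z_w^{\rm s}(\M P_w \M S_w)^\dagger\,. \nonumber
\end{IEEEeqnarray}

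The second step is to evaluate $\M P_w \M S_w (\M P_w \M S_w)^\dagger$ via the SVD. Writing $\M P_w \M S_w = \M U_w \M \Sigma_w \M V_w^\nH$ with $\M \Sigma_w$ having exactly $r_w$ nonzero (positive) singular values, the pseudo-inverse is $(\M P_w \M S_w)^\dagger = \M V_w \M \Sigma_w^\dagger \M U_w^\nH$, and hence $\M P_w \M S_w (\M P_w \M S_w)^\dagger = \M U_w \M \Sigma_w \M \Sigma_w^\dagger \M U_w^\nH$. The product $\M \Sigma_w \M \Sigma_w^\dagger$ is the $N_{\rm t}\times N_{\rm t}$ diagonal matrix with $1$ in its first $r_w$ diagonal entries and $0$ elsewhere, so $\M U_w \M \Sigma_w \M \Sigma_w^\dagger \M U_w^\nH = \sum_{k=1}^{r_w} [\M U_w]_{:,k} [\M U_w]_{:,k}^\nH$, the orthogonal projector onto the column space of $\M P_w \M S_w$. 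This identifies the first term of $\hat{\M H}_w^{\rm s}$ as $\M H_w^{\rm s} \sum_{k=1}^{r_w} [\M U_w]_{:,k} [\M U_w]_{:,k}^\nH$, which is precisely the claimed expectation.

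The final step is to show the noise term vanishes in expectation: since the entries of $\M Z_w^{\rm s}$ are zero-mean and $(\M P_w \M S_w)^\dagger$ is deterministic (a function of the precoder and the transmitted pilots/data, not of the noise), $\E{\M Z_w^{\rm s}(\M P_w \M S_w)^\dagger} = \E{\M Z_w^{\rm s}}(\M P_w \M S_w)^\dagger = \M 0$ by linearity of expectation. Combining the two terms yields the stated formula. I do not anticipate a genuine obstacle here — the result is essentially a bookkeeping exercise once the pseudo-inverse is expressed through the SVD; the only point requiring a little care is making sure $\M \Sigma_w \M \Sigma_w^\dagger$ is correctly identified as the rank-$r_w$ coordinate projector (in particular that the bottom-right block is zero precisely because $K < N_{\rm t}$ forces $r_w < N_{\rm t}$), which is what makes the estimator biased rather than an unbiased reconstruction of $\M H_w^{\rm s}$.
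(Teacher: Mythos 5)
Your proposal is correct and follows essentially the same route as the paper's proof: express the LS estimate as $\hat{\M H}_w^{\rm s}=\M Y_w^{\rm s}(\M P_w\M S_w)^\dagger$, use the SVD to identify $\M P_w\M S_w(\M P_w\M S_w)^\dagger$ as the rank-$r_w$ projector $\sum_{k=1}^{r_w}[\M U_w]_{:,k}[\M U_w]_{:,k}^\nH$, substitute the signal model, and let the zero-mean noise term vanish in expectation. The only difference is cosmetic (you use $\M\Sigma_w^\dagger$ from the full SVD where the paper uses the compact SVD with truncated factors), and your explicit remark that the pseudo-inverse picks the minimum-norm solution of the underdetermined problem is a welcome clarification, not a deviation.
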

	\begin{proof}
		See Appendix~\ref{apd:proof:prop:biased_lse}
	\end{proof}
	\begin{remark}
		In this case, the \ac{LS} estimation is a biased estimation problem in which a projection operation onto the subspace spanned by the $r_w$ eigenvectors $[\M U_w]_{:,k}$, $k= 1, 2, \dotsc, r_w$, is introduced. 
		In fact, it selects the solution with the smallest Frobenius-norm among the infinite solutions to the underdetermined problem~$\mathscr{P}_w^{\rm sce}$.
	\end{remark}

	\subsubsection{Tikhonov Regularization and Implicit Noise Averaging}
	To address the rank-deficiency issue, a method based on matrix completion is proposed in \cite{LiuLiuLiMasEld:J22} where the matrix $\M P_w \M S_w$ is completed to be full-rank. However, the cost of the matrix completion is the modification of the \ac{MIMO} precoder in the communication links. To solve this challenge under the framework of standard-compatible sensing integration to the existing communication networks, we introduce Tikhonov regularization into the \ac{LS} estimation of full-rank cases. Specifically, the result to this estimation problem is given by
	\begin{IEEEeqnarray}{rCl}
		\hat{\M H}_w^{\rm s}  = \M Y_w^{\rm s} \M S_w^\nH \M P_w^\nH\left(\M P_w \M S_w \M S_w^\nH \M P_w^\nH + \lambda _{\rm reg} \M I_{N_{\rm t}}\right)^{-1}\,,\label{eq:ls_est:tikhonov}
	\end{IEEEeqnarray}
	where the scalar $\lambda _{\rm reg}$ is the regularization factor. In fact, Tikhonov regularization imposes an additional smoothness constraint on the problem $\mathscr{P}_w^{{\rm sce}}$, and it is summarized as Proposition~\ref{prop:tikhonov}.
	
	\begin{proposition}[Tikhonov regularization]\label{prop:tikhonov}
		The estimation \eqref{eq:ls_est:tikhonov} is the solution of the problem  $\mathscr{P}_w^{{\rm sce}(1)}$ formulated as
		\begin{IEEEeqnarray}{rl'l}
			\mathscr{P}_w^{{\rm sce}(1)}: & 
			\mathop{\text{minimize}}_{\mathring{\M H}_w^{\rm s} \in \mathbb{C} ^{N_{\rm r} \times N_{\rm t}}} &
			\left\Vert \M Y^{\rm s}_w - \mathring{\M H}^{\rm s}_w \M P_w \M S_w \right\Vert^2_{\rm F}+\lambda _{\rm reg} \left\Vert \mathring{\M H}_w^{\rm s} \right\Vert_{\rm F}^2\,.\nonumber\\
		\end{IEEEeqnarray}
	\end{proposition}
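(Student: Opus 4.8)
The plan is to treat $\mathscr{P}_w^{{\rm sce}(1)}$ as an unconstrained, strictly convex quadratic program in the complex matrix variable $\mathring{\M H}_w^{\rm s}$ and to identify its unique stationary point with the closed form in \eqref{eq:ls_est:tikhonov}. I would write the objective as $J(\mathring{\M H}_w^{\rm s}) = \mathrm{tr}\big[(\M Y_w^{\rm s} - \mathring{\M H}_w^{\rm s}\M G_w)(\M Y_w^{\rm s} - \mathring{\M H}_w^{\rm s}\M G_w)^\nH\big] + \lambda_{\rm reg}\,\mathrm{tr}[\mathring{\M H}_w^{\rm s}(\mathring{\M H}_w^{\rm s})^\nH]$ with $\M G_w \triangleq \M P_w\M S_w$, and set $\M A_w \triangleq \M G_w\M G_w^\nH + \lambda_{\rm reg}\M I_{N_{\rm t}}$. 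Since $\M G_w\M G_w^\nH \succeq \V 0_{N_{\rm t}\times N_{\rm t}}$ and $\lambda_{\rm reg} > 0$, the matrix $\M A_w$ is Hermitian positive definite, hence invertible; this is exactly where the regularizer cures the rank deficiency of $\M G_w$ discussed before the proposition. The quadratic part of $J$ has ``Hessian'' governed by $\M A_w$, so $J$ is strictly convex and coercive and admits a unique global minimizer, characterized by the first-order optimality condition.

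The first step is then to compute that first-order condition. I would use Wirtinger calculus, differentiating $J$ with respect to $(\mathring{\M H}_w^{\rm s})^*$: from $\partial\,\mathrm{tr}(\M X\M B\M X^\nH)/\partial\M X^* = \M X\M B$ for Hermitian $\M B$, $\partial\,\mathrm{tr}(\M X\M C)/\partial\M X^* = \V 0$, and $\partial\,\mathrm{tr}(\M X^\nH\M D)/\partial\M X^* = \M D$ (the cross term $\mathrm{tr}(\M Y_w^{\rm s}\M G_w^\nH(\mathring{\M H}_w^{\rm s})^\nH)$ is first rewritten cyclically as $\mathrm{tr}((\mathring{\M H}_w^{\rm s})^\nH\M Y_w^{\rm s}\M G_w^\nH)$), expanding $J$ term by term gives $\partial J/\partial(\mathring{\M H}_w^{\rm s})^* = -\M Y_w^{\rm s}\M G_w^\nH + \mathring{\M H}_w^{\rm s}\M G_w\M G_w^\nH + \lambda_{\rm reg}\mathring{\M H}_w^{\rm s} = \V 0$, i.e.\ $\mathring{\M H}_w^{\rm s}\M A_w = \M Y_w^{\rm s}\M G_w^\nH$. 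Right-multiplying by $\M A_w^{-1}$ and substituting $\M G_w = \M P_w\M S_w$ yields exactly \eqref{eq:ls_est:tikhonov}; by uniqueness of the minimizer, this is the solution of $\mathscr{P}_w^{{\rm sce}(1)}$.

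An alternative route that sidesteps complex-matrix differentiation altogether is to fold the penalty into the residual: $J(\mathring{\M H}_w^{\rm s}) = \big\Vert\,[\,\M Y_w^{\rm s}\ \ \V 0_{N_{\rm r}\times N_{\rm t}}\,] - \mathring{\M H}_w^{\rm s}\,[\,\M P_w\M S_w\ \ \sqrt{\lambda_{\rm reg}}\,\M I_{N_{\rm t}}\,]\,\big\Vert_{\rm F}^2$, a plain least-squares objective whose augmented regressor $\widetilde{\M G}_w \triangleq [\,\M P_w\M S_w\ \ \sqrt{\lambda_{\rm reg}}\,\M I_{N_{\rm t}}\,] \in \mathbb{C}^{N_{\rm t}\times(L+N_{\rm t})}$ has full row rank (because of the $\sqrt{\lambda_{\rm reg}}\,\M I_{N_{\rm t}}$ block, using $\lambda_{\rm reg}>0$). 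Then $\widetilde{\M G}_w\widetilde{\M G}_w^\nH = \M A_w$ is invertible, the unique minimizer is $\mathring{\M H}_w^{\rm s} = [\,\M Y_w^{\rm s}\ \ \V 0\,]\widetilde{\M G}_w^\dagger = [\,\M Y_w^{\rm s}\ \ \V 0\,]\widetilde{\M G}_w^\nH(\widetilde{\M G}_w\widetilde{\M G}_w^\nH)^{-1}$, and since $[\,\M Y_w^{\rm s}\ \ \V 0\,]\widetilde{\M G}_w^\nH = \M Y_w^{\rm s}\M S_w^\nH\M P_w^\nH$ this reproduces \eqref{eq:ls_est:tikhonov}. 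There is no genuine obstacle: the proof is a direct computation. The only points deserving care are recording that $\M A_w$ is invertible (immediate from $\lambda_{\rm reg}>0$, which is the whole purpose of introducing the regularizer) and, on the calculus route, the bookkeeping of Wirtinger derivatives of traces involving both $\mathring{\M H}_w^{\rm s}$ and $(\mathring{\M H}_w^{\rm s})^\nH$.
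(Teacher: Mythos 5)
Your first route is essentially the paper's own proof: the paper (Appendix~B) likewise forms the regularized objective, takes the matrix (Wirtinger-type) derivative with respect to the conjugated variable, sets it to zero to obtain the normal equation $\hat{\M H}_w^{\rm s}\left(\M P_w \M S_w \M S_w^\nH \M P_w^\nH + \lambda_{\rm reg}\M I_{N_{\rm t}}\right) = \M Y_w^{\rm s}\M S_w^\nH \M P_w^\nH$, and right-multiplies by the inverse, which exists since $\lambda_{\rm reg}>0$. Your proposal is correct and in fact slightly more complete than the paper's argument, since you also record strict convexity and coercivity, so the stationary point is certified as the unique global minimizer rather than merely a critical point. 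The augmented least-squares variant, absorbing the penalty via the regressor $\bigl[\,\M P_w\M S_w \;\; \sqrt{\lambda_{\rm reg}}\,\M I_{N_{\rm t}}\bigr]$ of full row rank, is a valid alternative the paper does not use; it buys a derivative-free derivation and makes the link to the unregularized \ac{LS} solution of Proposition~\ref{prop:biased_lse} transparent, at the cost of introducing the augmented-system bookkeeping.
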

	\begin{proof}
		See Appendix~\ref{apd:proof:prop:tikhonov}.
	\end{proof}
	
	\begin{remark}\label{rmk:prop:tikhonov}
		By introducing the regularization parameter $\lambda _{\rm reg}$, the solution space is relaxed from the minimum Frobenius-norm constraint.
		Therefore, by adjusting the regularization factor, the ratio of the estimated sensing \ac{CSI} Frobenius norm to the noise Frobenius norm can be tuned to extract more information from the measurements.
	\end{remark} 
	
	To show the implicit noise averaging for challenge (ii), a conventional approximation \cite{LiuLiuLiMasEld:J22,LuLiuDonXioXuLiuJin:J24} on the transmitted datastream matrix $\M S_w$ is adopted, i.e.,
	\begin{IEEEeqnarray}{rCl}
		\M S_w \M S_w^\nH \approx L \M I_{K},\qquad \forall w \in \Set W\,. \label{eq:clt:approx}
	\end{IEEEeqnarray}
	By the central limit theorem, this approximation is valid when $L \gg K$.
	
	With the approximation \eqref{eq:clt:approx}, the estimation in \eqref{eq:ls_est:tikhonov} can be simplified to
	\begin{IEEEeqnarray}{rCl}
		\hat{\M H}_w^{\rm s} = \frac{1}{L}\M Y_w^{\rm s} \M S_w^\nH \M P_w^\nH \left(\M P_w \M P_w^\nH + \frac{\lambda _{\rm reg}}{L}\M I_{N_{\rm t}}\right)^{-1}\,.
	\end{IEEEeqnarray}
	It can be shown that the random matrix $\M Y_w^{\rm s} \M S_w^\nH/L$ implicitly averages out the noise in measurements.
	Substituting \eqref{eq:clt:approx} into \eqref{eq:sys_model:sens:w}, we have
	\begin{IEEEeqnarray}{rCl}
		\frac{1}{L} \M Y_w^{\rm s} \M S_w^\nH = \M H_{w}^{\rm s} \M P_w + \frac{1}{L} \M N_w^{\rm s} \M S_w^\nH\,.
	\end{IEEEeqnarray}
	To show the noise in measurements is averaged out, we need to show that the variance of the entries of $ \M N_w^{\rm s} \M S_w^\nH/L$ decreases as $L$ increases, which is summarized as Lemma~\ref{prop:noise_avg}.
	
	\begin{lemma}[Implicit noise averaging]\label{prop:noise_avg}
		If $L \gg K$, the entries of $ \M N_w^{\rm s} \M S_w^\nH/L$  follow independently from a zero-mean complex Gaussian distribution with a variance of $N_0/L$.
	\end{lemma}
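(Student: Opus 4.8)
The plan is to prove the claim by conditioning on the transmitted datastream matrix $\M S_w$ and exploiting (i) the linearity of the map $\M N_w^{\rm s} \mapsto \M N_w^{\rm s}\M S_w^\nH$, (ii) the Gaussianity and circular symmetry of the sensing noise, and (iii) the standard approximation $\M S_w\M S_w^\nH \approx L\,\M I_K$ already invoked in \eqref{eq:clt:approx}. Writing the $(i,k)$th entry of $\M N_w^{\rm s}\M S_w^\nH$ as $\sum_{\ell=1}^{L}[\M N_w^{\rm s}]_{i,\ell}\,\overline{[\M S_w]_{k,\ell}}$, one sees that, conditioned on $\M S_w$, it is a deterministic linear combination of the i.i.d.\ entries of the $i$th row of $\M N_w^{\rm s}$; hence each entry of $\M N_w^{\rm s}\M S_w^\nH$ is conditionally a zero-mean complex Gaussian random variable, and so is each entry of $\M N_w^{\rm s}\M S_w^\nH/L$.

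Next I would compute the conditional second-order statistics of these entries. Circular symmetry of $\M N_w^{\rm s}$ makes the pseudo-covariances $\E{[\M N_w^{\rm s}\M S_w^\nH]_{i,k}[\M N_w^{\rm s}\M S_w^\nH]_{i',k'}}$ vanish, so the collection is proper and it suffices to track ordinary covariances. A short computation using $\E{[\M N_w^{\rm s}]_{i,\ell}\,\overline{[\M N_w^{\rm s}]_{i',\ell'}}} = N_0\,\delta_{i,i'}\delta_{\ell,\ell'}$ yields $\E{[\M N_w^{\rm s}\M S_w^\nH]_{i,k}\,\overline{[\M N_w^{\rm s}\M S_w^\nH]_{i',k'}}} = N_0\,\delta_{i,i'}\,[\M S_w\M S_w^\nH]_{k',k}$. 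In particular, entries on different rows $i \neq i'$ are uncorrelated, and being jointly Gaussian they are independent; this already decorrelates across the $N_{\rm r}$ receive antennas, leaving only the $K$ columns to handle.

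The final step is to apply the approximation $\M S_w\M S_w^\nH \approx L\,\M I_K$, which is justified for $L \gg K$ by the law of large numbers / central limit theorem applied to the i.i.d.\ unit-power data symbols in $\M S_w$, exactly as in \eqref{eq:clt:approx}. Under this approximation the conditional covariance above becomes $N_0 L\,\delta_{i,i'}\delta_{k,k'}$, so all entries of $\M N_w^{\rm s}\M S_w^\nH$ are mutually uncorrelated and hence, being jointly Gaussian, independent, with variance $N_0 L$; dividing by $L$ rescales the variance to $N_0 L/L^2 = N_0/L$. Since this conditional law no longer depends on $\M S_w$, the statement also holds unconditionally, which establishes the lemma.

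I expect the only real subtlety to be the modeling one: exact independence and the exact variance $N_0/L$ hold only after the replacement $\M S_w\M S_w^\nH \approx L\,\M I_K$, so the care is in making precise in what sense ($L \gg K$, i.i.d.\ zero-mean unit-power symbols) the off-diagonal entries $[\M S_w\M S_w^\nH]_{k,k'}$ with $k \neq k'$ are negligible relative to the diagonal entries, and that the diagonal concentrates at $L$; everything else is the routine conditional-Gaussian bookkeeping carried out above.
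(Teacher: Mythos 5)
Your proposal is correct and follows essentially the same route as the paper's proof: condition on $\M S_w$, exploit that each entry of $\M N_w^{\rm s}\M S_w^\nH$ is a linear combination of i.i.d.\ circularly-symmetric Gaussian noise entries, and invoke the $L\gg K$ concentration of $\M S_w\M S_w^\nH$ toward $L\M I_K$ (the paper uses only the diagonal part, i.e.\ unit average symbol energy) to obtain variance $N_0/L$ after dividing by $L$. If anything, your treatment is slightly more complete, since you explicitly verify the vanishing pseudo-covariances and the cross-entry decorrelation needed for the independence claim, which the paper asserts without computing the off-diagonal covariances.
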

	\begin{proof}
		See Appendix~\ref{apd:proof:prop:noise_avg}.
	\end{proof}
	Furthermore, the equivalence between more measurements and lower noise variance can be implied from Lemma~\ref{prop:noise_avg}, which is summarized as Proposition~\ref{coro:noise_avg}.
	
	\begin{proposition}[Equivalence between more measurements and lower noise]\label{coro:noise_avg}
		If $L \gg K$ and $B$ is a positive integer, then the entries of the random matrices $\frac{1}{B L} \tilde{\M Y}_w^{\rm s} \tilde{\M S}_w^\nH$ and  $\frac{1}{L} \M Y_w^{\rm s} \M S_w^\nH $, respectively, given by
		\begin{IEEEeqnarray}{rCl}
			\frac{1}{B L} \tilde{\M Y}_w^{\rm s} \tilde{\M S}_w^\nH 
			&=&
			\M H_w^{\rm s} \M P_w + \frac{1}{BL} \tilde{\M N}_{w}^{\rm s} \tilde{\M S}_w^\nH 
			\\
			\frac{1}{L} \M Y_w^{\rm s} \M S_w^\nH 
			&=&
			\M H_w^{\rm s} \M P_w + \frac{1}{L} \mathring{\M N}_w^{\rm s} \M S_w^\nH\,,
		\end{IEEEeqnarray}
		are independent and follow the same distribution. Here, 
		$\tilde{\M Y}_w^{\rm s}, \tilde{\M N}_w^{\rm s} \in \mathbb{C} ^{N_{\rm r} \times BL}$, $\tilde{\M S}_w \in \mathbb{C} ^{K\times BL} $, and $\mathring{\M N}_w^{\rm s} \in \mathbb{C} ^{K \times L}$;
		the entries of $\tilde{\M N}_w^{\rm s}$ follow independently from a zero-mean complex Gaussian distribution with a variance of $N_0$; and
		the entries of $\mathring{\M N}_w^{\rm s}$ follow independently from a zero-mean complex Gaussian distribution with a variance of $N_0/B$.
	\end{proposition}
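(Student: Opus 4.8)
The plan is to reduce the proposition to two applications of Lemma~\ref{prop:noise_avg} together with a short independence argument. From the two displayed decompositions, $\frac{1}{BL}\tilde{\M Y}_w^{\rm s}\tilde{\M S}_w^\nH$ equals the deterministic matrix $\M H_w^{\rm s}\M P_w$ plus the noise term $\frac{1}{BL}\tilde{\M N}_w^{\rm s}\tilde{\M S}_w^\nH$, and $\frac{1}{L}\M Y_w^{\rm s}\M S_w^\nH$ equals the same deterministic matrix plus $\frac{1}{L}\mathring{\M N}_w^{\rm s}\M S_w^\nH$. Because the two random matrices differ from their respective noise terms by one and the same fixed matrix, they have the same distribution if and only if the two noise terms do, and they are independent if and only if the two noise terms are. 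Hence it suffices to show that (i) $\frac{1}{BL}\tilde{\M N}_w^{\rm s}\tilde{\M S}_w^\nH$ and $\frac{1}{L}\mathring{\M N}_w^{\rm s}\M S_w^\nH$ have mutually independent entries all drawn from the same zero-mean circularly-symmetric complex Gaussian law, and (ii) these two noise terms are independent of each other.

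For (i), I would first apply Lemma~\ref{prop:noise_avg} verbatim to a measurement window of length $BL$ in place of $L$: since $B$ is a positive integer and $L \gg K$, we also have $BL \gg K$, so the lemma yields that the entries of $\frac{1}{BL}\tilde{\M N}_w^{\rm s}\tilde{\M S}_w^\nH$ are independent, zero-mean, circularly-symmetric complex Gaussian with per-entry variance $N_0/(BL)$. Next, I would reapply the lemma to a window of length $L$ but with the per-entry noise power set to $N_0/B$ rather than $N_0$. The argument behind Lemma~\ref{prop:noise_avg} is homogeneous of degree one in the noise power --- the limiting (central-limit) covariance of $\frac{1}{L}\M N_w^{\rm s}\M S_w^\nH$ scales linearly with the variance of the entries of $\M N_w^{\rm s}$ --- so the substitution $N_0 \mapsto N_0/B$ produces entries of $\frac{1}{L}\mathring{\M N}_w^{\rm s}\M S_w^\nH$ that are independent, zero-mean, circularly-symmetric complex Gaussian with per-entry variance $(N_0/B)/L = N_0/(BL)$. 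The two distributions therefore coincide.

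For (ii), I would invoke the AWGN modeling assumption: $\tilde{\M N}_w^{\rm s}$ and $\mathring{\M N}_w^{\rm s}$ model thermal noise over disjoint observation intervals and are thus independent, while the symbol matrices $\tilde{\M S}_w$ and $\M S_w$ are either deterministic or drawn independently across intervals. Since $\frac{1}{BL}\tilde{\M N}_w^{\rm s}\tilde{\M S}_w^\nH$ is a measurable function of $(\tilde{\M N}_w^{\rm s},\tilde{\M S}_w)$ and $\frac{1}{L}\mathring{\M N}_w^{\rm s}\M S_w^\nH$ is a measurable function of $(\mathring{\M N}_w^{\rm s},\M S_w)$, and these two pairs are independent, the two noise terms are independent; adding back the common deterministic term $\M H_w^{\rm s}\M P_w$ preserves independence and the marginal laws, completing the argument.

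I expect the main obstacle to be the second application of Lemma~\ref{prop:noise_avg}: one must verify that the approximate-Gaussianity argument underlying that lemma (via $\M S_w\M S_w^\nH \approx L\M I_K$ and the central limit theorem) depends on the noise level only through an overall scaling of the covariance, so that replacing $N_0$ by $N_0/B$ is legitimate and does not, e.g., affect the validity regime $L \gg K$. A secondary subtlety is that the proposition leaves the joint law of $\tilde{\M S}_w,\M S_w,\tilde{\M N}_w^{\rm s},\mathring{\M N}_w^{\rm s}$ implicit, so the independence claim in part (ii) has to be anchored explicitly to the standing assumption that distinct measurement batches use independently generated data and see independent noise realizations.
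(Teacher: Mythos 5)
Your proposal is correct and follows essentially the same route as the paper: the paper's proof simply invokes Lemma~\ref{prop:noise_avg} twice to conclude that both noise terms have zero-mean complex Gaussian entries of variance $N_0/(BL)$, which is exactly your part (i). Your additional care in justifying the substitution $N_0 \mapsto N_0/B$ (scaling of the covariance in the lemma's argument) and in making the independence of the two batches explicit only fills in details the paper leaves implicit.
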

	\begin{proof}
		By Lemma~\ref{prop:noise_avg}, the variance of the entries of $ \frac{1}{B L} \tilde{\M N}_w^{\rm s} \tilde{\M S}_w^\nH $ is $N_0/\left(BL\right)$, and the variance of  the entries of $ \frac{1}{ L} \mathring{\M N}_w^{\rm s} {\M S}_w^\nH $ is $N_0/(BL)$.
	\end{proof}
	
	\begin{remark}
		Before the estimated sensing \ac{CSI} tensor is fed into the feature extractor, normalization can be applied to adjust the dynamic range of the estimation. For example, the estimated \ac{CSI} can be normalized by the modulus of its element with the largest modulus $\max\{|\tilde{\Tsr H}^{\rm s}|\}$, given by
		\begin{IEEEeqnarray}{rCl}
			\hat{\Tsr H}^{\rm s}= \tilde{\Tsr H}^{\rm s}/\max\{|\tilde{\Tsr H}^{\rm s}|\}\,.
		\end{IEEEeqnarray}
		where
		\begin{IEEEeqnarray}{rCl}
			\tilde{\Tsr H}^{\rm s} = \begin{bmatrix}
				\hat{\M H}^{\rm s}_{1} & \hat{\M H}^{\rm s}_{2} & \cdots & \hat{\M H}^{\rm s}_W
			\end{bmatrix}_3\,.
		\end{IEEEeqnarray}
	\end{remark}
	
	\subsection{Feature Extraction from Sensing \ac{CSI}}\label{ssec:feature_ext}
	To train the \ac{NN} for scene reconstruction, it is critical to extract effective features from the estimated sensing \ac{CSI}.
	
	\subsubsection{Beamspace and Delay-Domain Features}\label{ssec:beam_delay_features}
	To extract the angular information of targets from the estimated sensing \ac{CSI}, a \ac{DFT} is applied to the first and second dimensions of the estimated sensing \ac{CSI} tensor $\hat{\Tsr H}^{\rm s} \in \mathbb{C} ^{N_{\rm r} \times N_{\rm t} \times W}$.
	Specifically, the extracted feature tensor ${\Tsr F}_{\rm beam}^{\rm s}$ in beamspace is given by
	\begin{IEEEeqnarray}{rCl}
		{\Tsr F}_{\rm beam}^{\rm s} &=&
		\M F_{N_{\rm t}} \times_2 \M F_{N_{\rm r}} \times_1 \hat{\Tsr H}^{\rm s}\,.
	\end{IEEEeqnarray}
	For the $w$th subcarrier, the spectrum $\big|\left[\Tsr F_{\rm beam}^{\rm s}\right]_{:,:,w}\big|$  of extracted beamspace feature $\left[\Tsr F_{\rm beam}^{\rm s}\right]_{:,:,w}$ reflects the \ac{Tx} and \ac{Rx} beam indices of the targets. 
	
	To extract the range information of targets from the estimated sensing \ac{CSI}, an \ac{IDFT} is applied to the third dimension of the tensor $\hat{\Tsr H}^{\rm s}$. The extracted feature $\Tsr F_{\rm delay}^{\rm s}$ is given by
	\begin{IEEEeqnarray}{rCl}
		\Tsr F_{\rm delay}^{\rm s} = \M F_{W}^\nH \times_3 \hat{\Tsr H}^{\rm s}\,.
	\end{IEEEeqnarray}
	Consider the independence among the three dimensions of the tensor $\hat{\Tsr H}^{\rm s}$. The feature with angular and range information of targets is given by
	\begin{IEEEeqnarray}{rCl}
		\Tsr F_{\rm beam, delay}^{\rm s} = \M F_{W}^\nH \times_3 \Tsr F_{\rm beam}^{\rm s}.
	\end{IEEEeqnarray}
	To avoid any errors induced by the phase mismatch between the \ac{Tx} and \ac{Rx} of the \ac{ISAC} transceiver, the spectrum of the features in beamspace and delay domain is adopted as the input features, given by
	\begin{IEEEeqnarray}{rCl}
		{\Tsr F}_{\rm bd}^{\rm s} = \left|\Tsr F_{\rm beam, delay}^{\rm s}\right| \in \mathbb{R} ^{N_{\rm r} \times N_{\rm t} \times W}\,. \IEEEyesnumber \label{eq:feature:beam_delay}
	\end{IEEEeqnarray}

	\subsubsection{Direct Features} \label{ssec:direct_features}
	To validate the effectiveness of the extracted feature $\Tsr F^{\rm s}$, a real-valued tensor $\Tsr F^{\rm s}_{\rm d} \in \mathbb{R} ^{N_{\rm r} \times N_{\rm t} \times 2W}$ formed by concatenating the real and imaginary parts of the normalized sensing \ac{CSI} estimation $\hat{\Tsr H}^{\rm s}$ is adopted as a baseline. Specifically, the tensor $\Tsr F_{\rm d}^{\rm s}$ of the direct features is given by
	\begin{IEEEeqnarray}{rCl}
		\Tsr F^{\rm s}_{\rm d} &=&\Big[
		\Re\left\{\left[\hat{\Tsr H}^{\rm s}\right]_{:,:, 1}\right\} \; \cdots \; \Re\left\{\left[\hat{\Tsr H}^{\rm s}\right]_{:,:, W}\right\}\nonumber \\
		&& \phantom{\text{$\Big[$}}
		\Im\left\{\left[\hat{\Tsr H}^{\rm s}\right]_{:,:, 1}\right\}  \;\cdots \; \Im\left\{\left[\hat{\Tsr H}^{\rm s}\right]_{:,:, W}\right\} 
		\Big]_3\,. 
	\end{IEEEeqnarray}

	\subsubsection{Feature Fusion with Prior Knowledge on Environments}
	If prior knowledge on the static environment, including self-interference between the \ac{Tx} and \ac{Rx} antennas at the \ac{ISAC} transceiver,  is available, then features can be constructed by fusing the feature of prior knowledge and the features extracted from the estimated sensing \ac{CSI} $\hat{\Tsr H}^{\rm s} $. In particular, the feature extraction is performed as described in Section~\ref{ssec:beam_delay_features} or \ref{ssec:direct_features}.
	The prior knowledge on the static environment in the form of reference sensing \ac{CSI} $\hat{\Tsr H}_{\rm ref}^{\rm s}$ can be acquired during a calibration phase by sending unitary training pilots to the target-free environment. In practice, the calibration can be performed during nighttime when the targets are not in the \ac{RoI}. 
	To match the estimated sensing \ac{CSI} $\hat{\Tsr H}^{\rm s}$ influenced by the \ac{MIMO} precoder and transmitted datastreams, features are extracted from the precoding-imposed reference sensing \ac{CSI}, given by
	\begin{IEEEeqnarray}{rCl}\label{eq:def:precoding_imposed_ref_CSI}
		\left[\hat{\Tsr H}_{\rm p, ref}^{\rm s}\right]_{:,:,w}&=& \left[\hat{\Tsr H}^{\rm s}_{\rm ref}\right]_{:,:,w} \M P_w \M S_w \M S_w^\nH \M P_w^\nH\\
		&&\times\> \left(\M P_w \M S_w \M S_w^\nH \M P_w^\nH  + \lambda _{\rm reg} \M I_{N_{\rm t}}\right)^{-1},\quad \forall w\in\Set W\,.\nonumber
	\end{IEEEeqnarray}
	Here, \eqref{eq:def:precoding_imposed_ref_CSI} is obtained by substituting $\M Y_w = \left[\hat{\Tsr H}_{\rm ref}^{\rm s}\right]_{:,:,w} \M P_w \M S_w$ into \eqref{eq:ls_est:tikhonov}.
	Subsequently, the feature tensor $\Tsr F_{\rm e, ref}^{\rm s}$ is extracted from the precoding-imposed reference sensing \ac{CSI} $\hat{\Tsr H}_{\rm p, ref}^{\rm s}$ 
	in the same manner as
	the feature tensor $\Tsr F_{\rm e}^{\rm s}$ from the estimated sensing \ac{CSI}.
	
	The fused feature tensor $\Tsr F^{\rm s}$ is given by
	\begin{IEEEeqnarray}{rCl}
		\Tsr F^{\rm s} = g_{\rm fus}\left(\Tsr F^{\rm s}_{\rm e}, \Tsr F^{\rm s}_{\rm e, ref}\right)\!,
	\end{IEEEeqnarray}
	where the fusion mapping $g_{\rm fus}$ of the features $\Tsr F_{\rm e}^{\rm s}$ and $ \Tsr F_{\rm e, ref}^{\rm s}$ can be chosen as subtraction, stacking, or reference-free operations, respectively, defined as follows:
	\begin{IEEEeqnarray}{r'rCl}
		g^{\rm SUB}_{\rm fus}: &(\Tsr F_1, \Tsr F_2)& \mapsto &\Tsr F_1 - \Tsr F_2
		\\
		g^{\rm STA}_{\rm fus}: &(\Tsr F_1, \Tsr F_2) & \mapsto& \begin{bmatrix}
			\Tsr F_1 & \Tsr F_2
		\end{bmatrix}_2
		\\
		g^{\rm NOR}_{\rm fus}:& (\Tsr F_1, \Tsr F_2)& \mapsto &\Tsr F_1\,.
	\end{IEEEeqnarray}

	\subsection{Feature Slicing}
	To match the input of the NN, the feature tensor needs to be reshaped before being fed into the NN. 
	For \acp{MLP}, reshaping the feature tensor is trivial, as the tensor is reshaped into a vector. 
	However, for \acp{CNN}, the way the tensor is sliced can affect the performance of the trained networks. We decided to slice along the second dimension of the feature tensor, and the feature map associated with the $n_{\rm t}$th input channel of the \ac{CNN} is given by $\left[\Tsr F^{\rm s}\right]_{:,n_{\rm t}, :} $.
	
	\subsection{Neural Network Training}
	
	\subsubsection{Network Structures}\label{ssec:network_structure}
	Although we do not impose any constraints on the network structure under the proposed neural \ac{ISAC} framework, we choose \acp{CNN} due to their performance in extracting structural and spatial information from feature maps. To efficiently train deep networks, we devise the network using residual connections based on ResNet18 \cite{HeZhaRenSun:C16}. 
	The backbone of the \ac{CNN} includes $8$ cascaded residual blocks. 
	Moreover, before the residual blocks, an input convolutional layer and max pooling layer are applied to initially aggregate information from the sliced feature maps. Different from the typical ResNet18, we replace the $7\times7$ convolutional kernel with a $3\times 3$ convolution due to the smaller size of the input feature maps. After the residual blocks, there is an average pooling layer followed by a fully connected layer. As shown in Fig.~\ref{fig:arch_neural_sensing}, for hard or soft map estimation, the output activation function of the fully connected layer is sigmoid. For probability map estimation, the output activation function of the fully connected layer is softmax.
	
	\subsubsection{Training}
	\begin{algorithm}[!t]
		\caption{\ac{NN} Training for the Neural Sensing Estimator}
		\label{alg:nn_training}
		\begin{algorithmic}[1]
			\STATE \textbf{Input:} Training dataset $\tilde{\Set D}_{\rm train}$, validation dataset $\tilde{\Set D}_{\rm val}$, Tikhonov regularization factor $\lambda_{\rm reg}$, CNN model $g\sb{\V \theta_{\rm nn} }^{\rm nn}(\cdot)$, noise variance $N_0$, initial learning rate $\eta_{\rm init}$
			\STATE \textbf{Output:} The trained parameters $\mathring{\V \theta}_{\rm nn}$ of the CNN
			\STATE Initialize the parameters $\V \theta_{\rm nn}$ of CNN $g\sb{\V \theta_{\rm nn} }^{\rm nn}(\cdot)$
			\STATE Initialize learning rate $\eta \leftarrow \eta_{\rm init}$
			\STATE Initialize best validation loss $\mathring{L}_{\rm val} \leftarrow \infty$
			
			\FOR{$i = 1$ to $N\sb{{\rm epoch}}$}
			\FOR{$(\V b^{(q)}, \Tsr H^{{\rm c}(q)}, \Tsr H^{{\rm s}(q)}, \V m^{(q)})$ in $\tilde{\Set D}_{\rm train}$}
			\STATE \# ISAC Tx:
			\STATE $\Tsr S \leftarrow g_{\rm c}^{\rm t}(\V b^{(q)})$ \# Transmitted OFDM RG
			\STATE $\Tsr P^{\rm ZF} \leftarrow g_{\rm ZF}(\Tsr H^{{\rm c}(q)})$  \# Zero-forcing precoding
			
			\STATE  \# Sensing channel propagation:
			\STATE $\Tsr Z^{\rm s} \leftarrow \text{ComplexGaussian}(N_0)$  \# Noise generation
			\STATE $\Tsr Y^{{\rm s}} \leftarrow \Tsr H^{{\rm s}(q)} \times_3 \Tsr P^{\rm ZF} \times_3 \Tsr S + \Tsr Z^{\rm s}$
			
			\STATE  \# Sensing channel estimation:
			\STATE $\hat{\Tsr H}^{\rm s} \leftarrow \text{ChannelEstimator}(\Tsr Y^{\rm s}, \Tsr P^{\rm ZF}, \Tsr S, \lambda _{\rm reg})$
			
			\STATE \# Input features for sensing:
			\STATE $ \Tsr F^{\rm in} \leftarrow \text{FeatureExtractor}(\hat{\Tsr H}^{\rm s}, \Tsr H^{\mathrm{s}(q)}, \Tsr P^{\rm ZF}, \Tsr S, \lambda _{\rm reg})$
			
			\STATE \# Update learnable parameters of CNN:
			\STATE $L \leftarrow L_{\rm nn}(\V m^{(q)},  g^{\rm nn}\sb{\V \theta_{\rm nn} }(\Tsr F^{\rm in}))$
			\STATE $\V \theta_{\rm nn} \leftarrow \text{AdamOptimizer}(\V \theta_{\rm nn}, \eta, \nabla_{\V \theta _{\rm nn}} L)$
			\ENDFOR
			\STATE \# Validation
			\STATE $L_{\rm val} \leftarrow \text{ModelEvaulation}(\tilde{\Set D}_{\rm val}, \V \theta _{\rm nn})$
			\IF{$L_{\rm val} < \mathring{L}_{\rm val}$}
			\STATE $\mathring{L}_{\rm val}, \mathring{\V \theta }_{\rm nn} \leftarrow L_{\rm val}, \V \theta _{\rm nn}$
			\ENDIF
			\ENDFOR
			\RETURN $\mathring{\V \theta}_{\rm nn}$
		\end{algorithmic}
	\end{algorithm}
	
	The \ac{NN} training algorithm for the proposed neural sensing estimator is presented in Algorithm~\ref{alg:nn_training}. 
	The training algorithm first initializes the learnable parameter of the \ac{CNN} model $g_{\V \theta _{\rm nn}}^{\rm nn}(\cdot)$ and the learning rate $\eta$. 
	Without loss of generality, and for the sake of clarity in presentation, Algorithm~\ref{alg:nn_training} considers the case where the batch size is $1$. 
	For the $q$th batch, the information bits $\V b^{(q)}$ to be transmitted are mapped to transmitted \ac{OFDM} \ac{RG} $\Tsr S$ through the subsystem $g_{\rm c}^{\rm t}\left(\cdot\right)$, as shown in \eqref{eq:problem_3:condition:3}--\eqref{eq:problem_3:condition:5}. 
	The communication channel in frequency domain is mapped to a zero-forcing precoding tensor $\Tsr P^{\rm ZF}$ through the mapping $g_{\rm ZF}(\cdot)$ whose operation on the dimensions of \ac{Tx} antenna index and datastream index is specified as 
	\begin{IEEEeqnarray}{rCl}
		\left[g_{\rm ZF}(\hat{\Tsr H}^{\rm c})\right]_{:,:,w} \triangleq \big(\big[\hat{\Tsr H}^{\rm c}\big]_{:,:,w}\big)^\dagger \M \Xi_{w}, \quad w \in \Set W\,,
	\end{IEEEeqnarray}
	where $\hat{\Tsr H}^{\rm c} \in \mathbb{C} ^{N_{\rm t} \times K \times W}$ is the estimated communication \ac{CSI} from the \ac{UL} transmission,\footnote{If the perfect \ac{CSI} condition is assumed, then $\hat{\Tsr{H}}^{\rm c}=\Tsr{H}^{\rm c}$.} and the diagonal matrix $\M \Xi_w \in \mathbb{C} ^{K\times K}$ normalizes $\|[\hat{\Tsr H}^{\rm c}]_{:,k,w}\|_2$, $k = 1, 2, \dotsc, K$, to $1$.
	The noise generator $\text{ComplexGaussian}(N_0)$ generates a noise tensor $\Tsr Z^{\rm s}$, where the elements are independent, identically distributed and drawn from a zero-mean complex Gaussian distribution with variance $N_0$.
	According to the signal model \eqref{eq:signal_model_sens:tensor}, the received sensing signal tensor $\Tsr Y^{\rm s}$ is generated from the sensing channel tensor $\Tsr H^{{\rm s}(q)}$ along with the previously generated tensors $\Tsr S$, $\Tsr P^{\rm ZF}$, and $\Tsr Z^{\rm s}$.
	With tensors $\Tsr Y^{\rm s}$, $\Tsr P^{\rm ZF}$, and $\Tsr S$ as the input of the neural sensing estimator, the forward inference follows the signal processing flow as shown in Fig.~\ref{fig:arch_neural_sensing}. Specifically, the function $\text{ChannelEstimator}(\Tsr Y^{\rm s}, \Tsr P^{\rm ZF}, \Tsr S, \lambda _{\rm reg})$ provides the estimated sensing \ac{CSI} tensor $\hat{\Tsr H}^{\rm s}$ from the input tensors of the neural sensing estimator according to Section~\ref{ssec:sens_chest}. Then, the function $\text{FeatureExtractor}(\hat{\Tsr H}^{\rm s}, \Tsr H^{\mathrm{s}(q)}, \Tsr P^{\rm ZF}, \Tsr S, \lambda _{\rm reg})$ extracts and slices the fused feature tensor $\Tsr F^{\rm s}$ according to Section~\ref{ssec:feature_ext}. With the sliced input features, the learnable parameters of the \ac{CNN} model $g_{\V \theta _{\rm nn}}^{\rm nn}(\cdot )$ as detailed in Section~\ref{ssec:network_structure} are updated with an Adam optimizer \cite{Kin:14}.
	The algorithm returns the trained model parameters $\mathring{\V \theta }_{\rm nn}$ with the lowest validation loss.

	\section{Case Studies}
	\label{sec:casestudy}
	This section covers simulations under the proposed signal processing framework with different extracted features and feature fusion techniques. First, an example is provided to illustrate scene reconstruction with a neural sensing estimator under the three discrete map representations. Then, the performance of different neural \ac{ISAC} systems within the proposed signal processing framework is evaluated across various map resolutions, Tikhonov regularization factors, environmental conditions, target speeds, and bandwidths. In addition, the primary limitation of the proposed framework is discussed.
	
	\subsection{Evaluation Setup}
	\begin{figure}[tp]
		\centering
		\includegraphics[width=0.95\columnwidth]{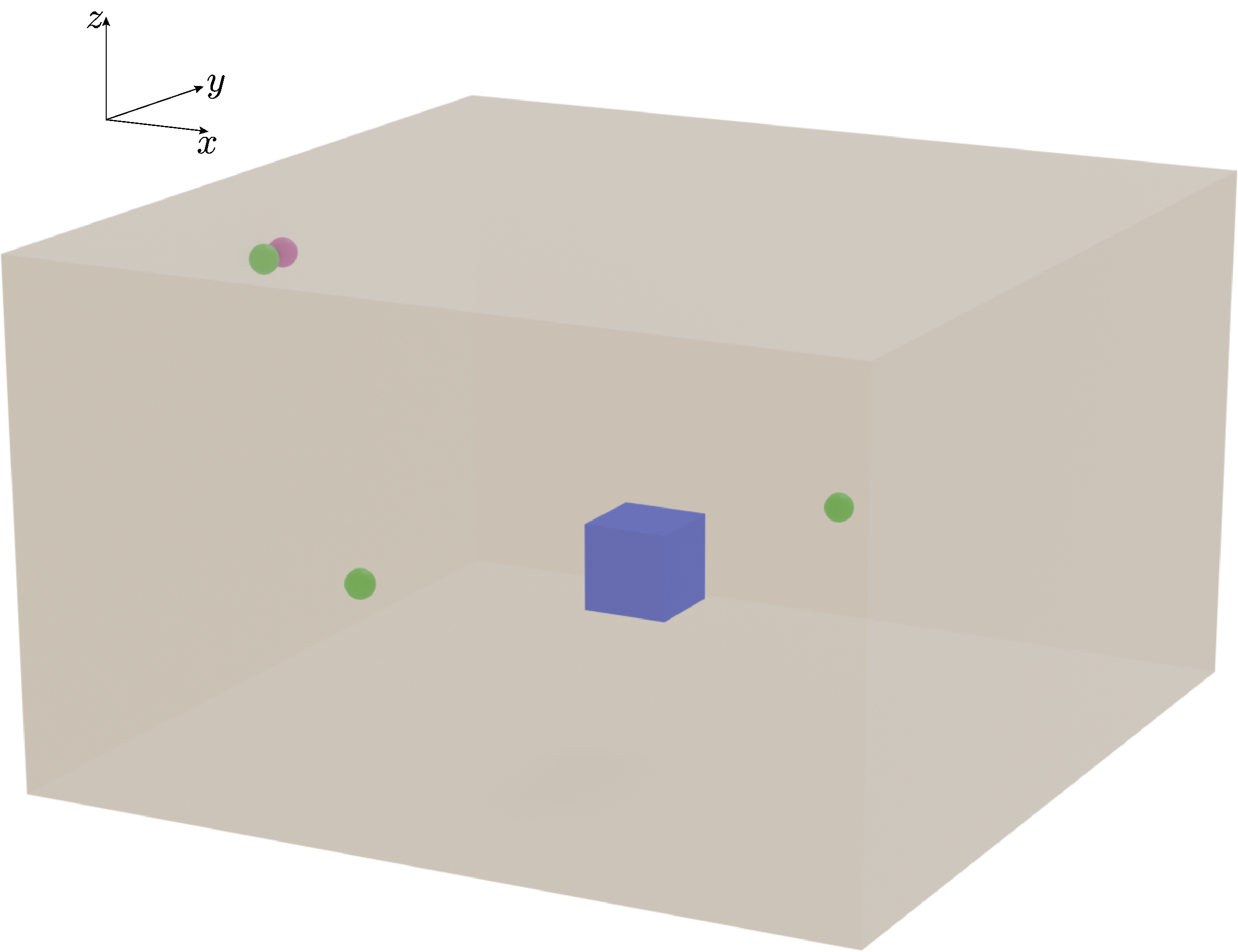}
		\caption{The considered scene in our case studies. The blue cube is a target, and the two small green balls close to the target are \acp{Rx} at \acp{UE}. The pair of small pink balls and small green balls corresponds to the \ac{Tx} and \ac{Rx} at the \ac{ISAC} transceiver, respectively.}
		\label{fig:sceneplot}
	\end{figure}
	
	\newcommand{\subfigwidth}{0.32\textwidth}
	\newcommand{\subfigfigwidth}{0.7\textwidth}
	\newcommand{\subfigfigwidthSpecial}{0.63\textwidth}
	\begin{figure*}[!t]
		\begin{subfigure}[b]{\subfigwidth}
			\centering
			\includegraphics[width=\subfigfigwidth]{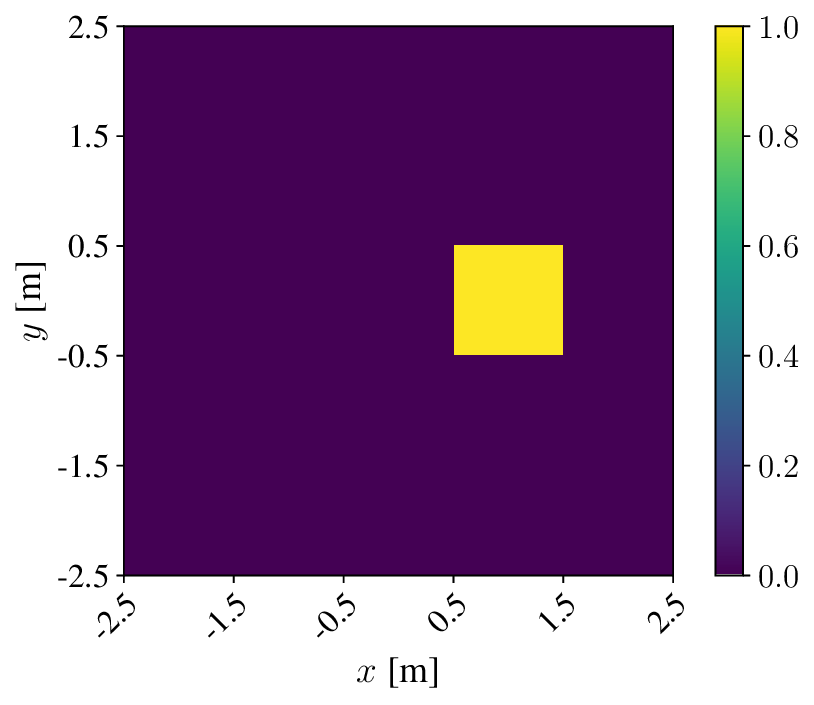}
			\caption{}
			\label{fig:examples:prob_map:label}
		\end{subfigure}
		\hfill
		\begin{subfigure}[b]{\subfigwidth}
			\centering
			\includegraphics[width=\subfigfigwidth]{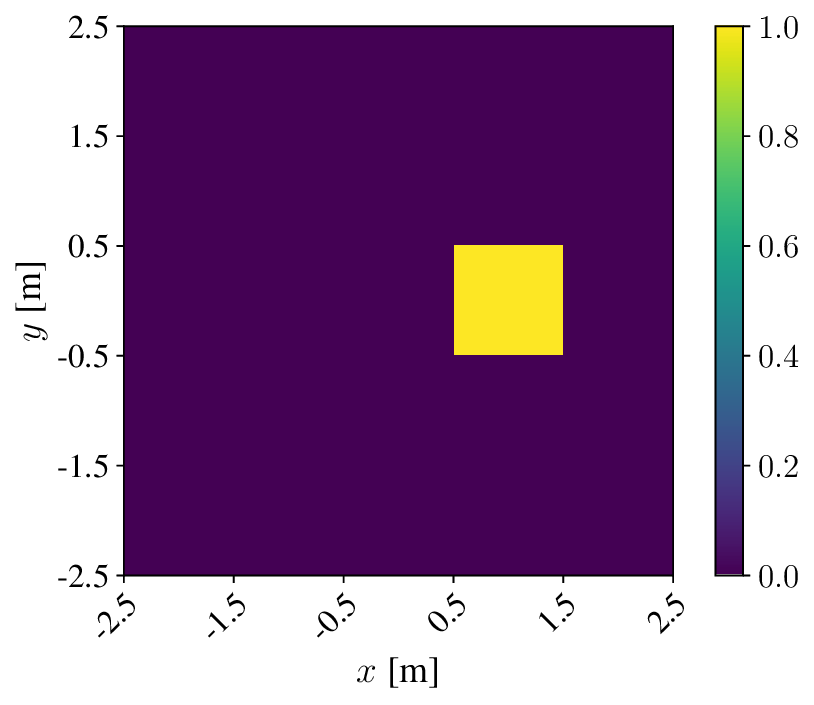}
			\caption{}
			\label{fig:examples:prob_map:hard_predict}
		\end{subfigure}
		\hfill
		\begin{subfigure}[b]{\subfigwidth}
			\centering
			\includegraphics[width=\subfigfigwidth]{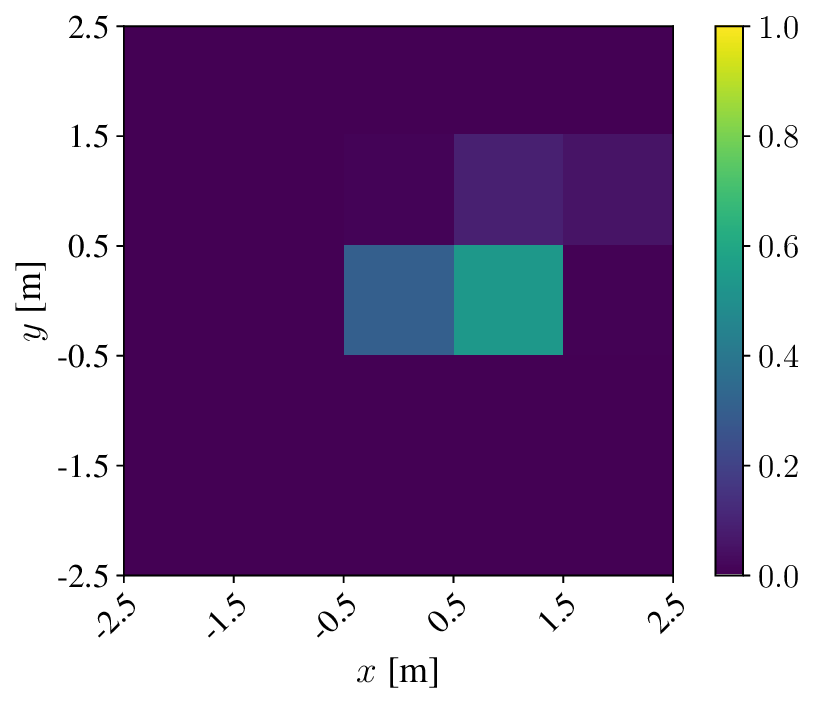}
			\caption{}
			\label{fig:examples:prob_map:soft_predict}
		\end{subfigure}
		\vspace{1em}
		\begin{subfigure}[b]{\subfigwidth}
			\centering
			\includegraphics[width=\subfigfigwidth]{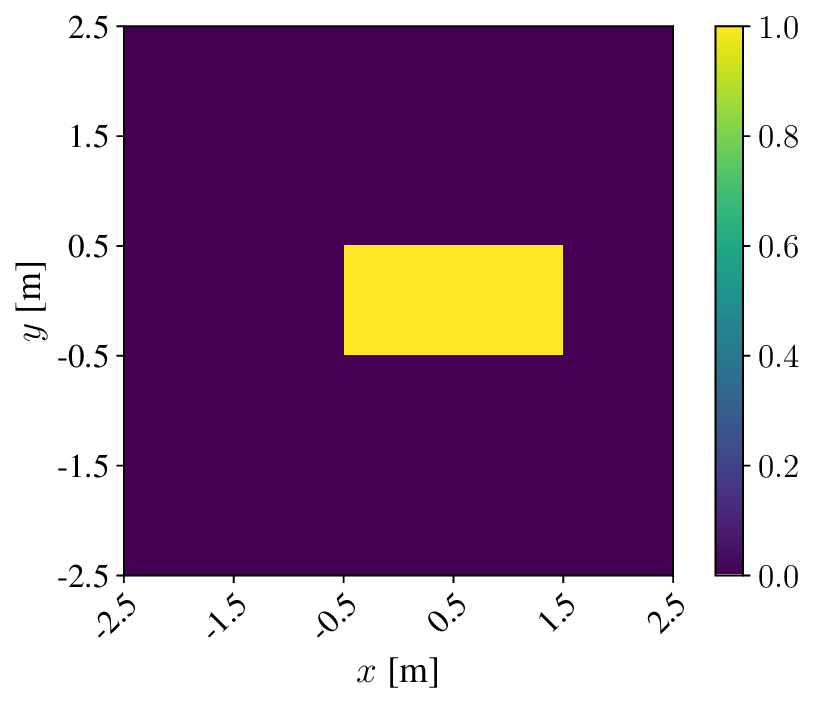}
			\caption{}
			\label{fig:examples:hard_map:label}
		\end{subfigure}
		\hfill
		\begin{subfigure}[b]{\subfigwidth}
			\centering
			\includegraphics[width=\subfigfigwidth]{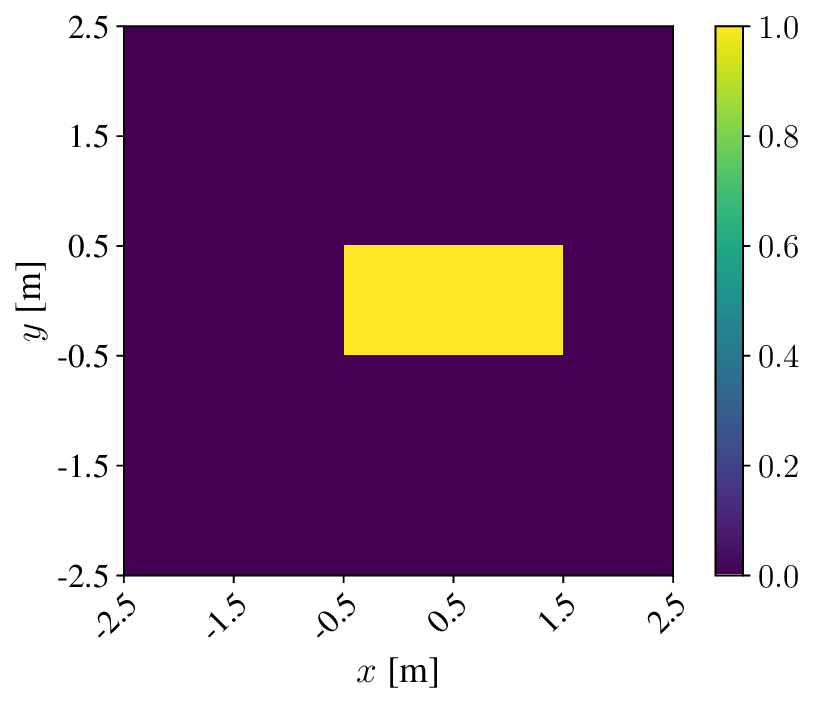}
			\caption{}
			\label{fig:examples:hard_map:hard_predict}
		\end{subfigure}
		\hfill
		\begin{subfigure}[b]{\subfigwidth}
			\centering
			\includegraphics[width=\subfigfigwidth]{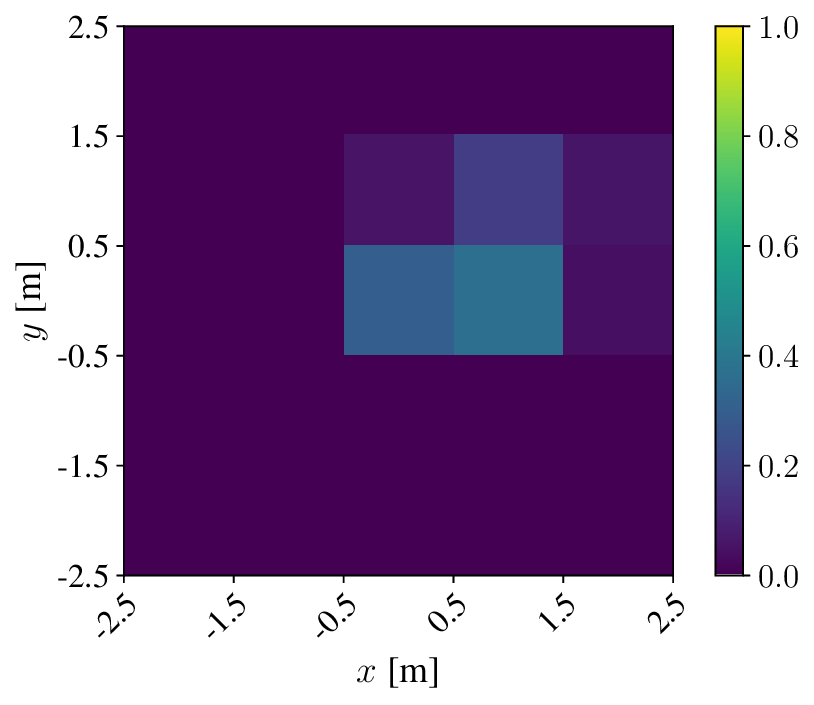}
			\caption{}
			\label{fig:examples:hard_map:soft_predict}
		\end{subfigure}
		\vspace{1em}
		\begin{subfigure}[b]{\subfigwidth}
			\centering
			\includegraphics[width=\subfigfigwidthSpecial]{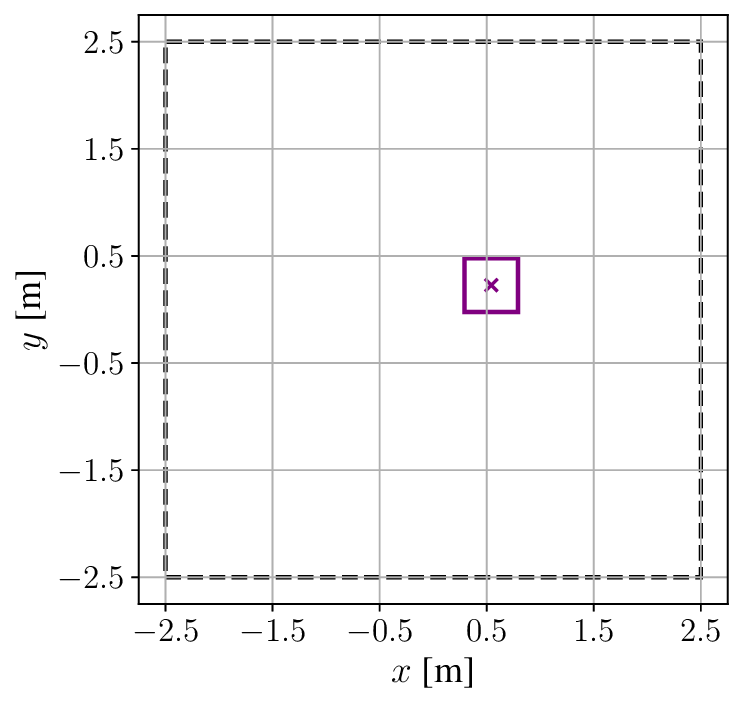}
			\caption{}
			\label{fig:examples:groud_truth}
		\end{subfigure}
		\hfill
		\begin{subfigure}[b]{\subfigwidth}
			\centering
			\includegraphics[width=\subfigfigwidth]{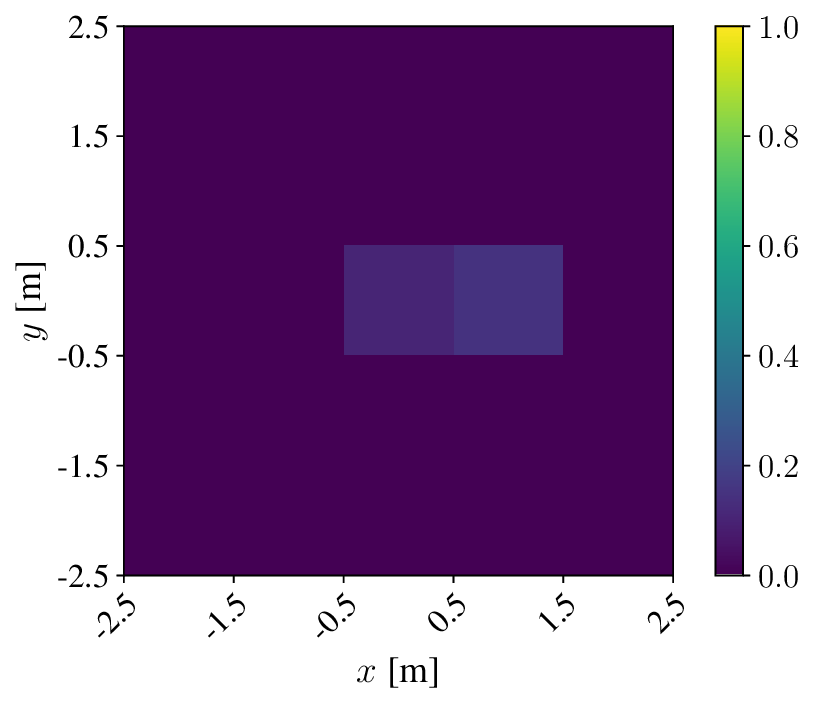}
			\caption{}
			\label{fig:examples:soft_map:label}
		\end{subfigure}
		\hfill
		\begin{subfigure}[b]{\subfigwidth}
			\centering
			\includegraphics[width=\subfigfigwidth]{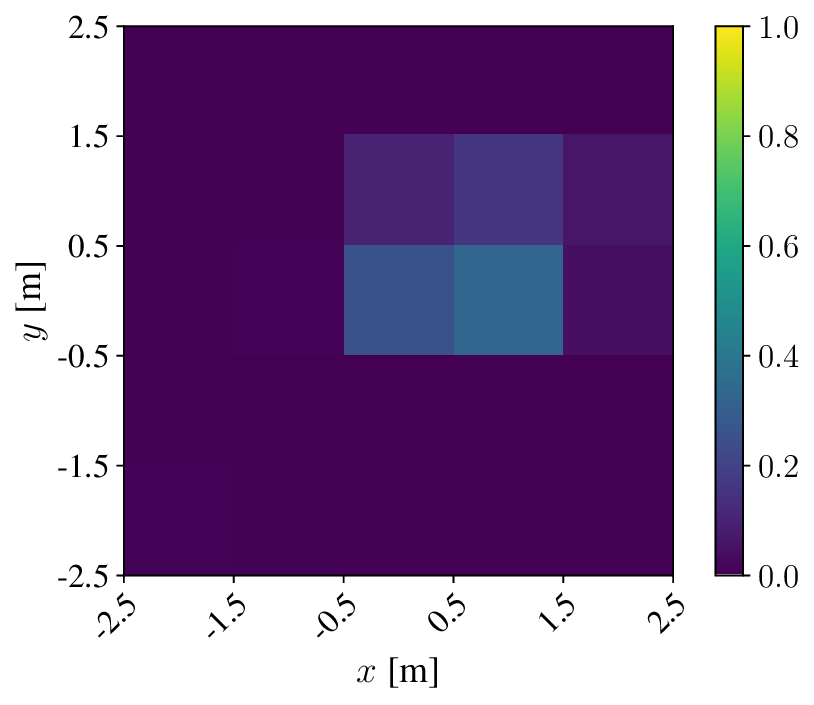}
			\caption{}
			\label{fig:examples:soft_map:soft_predict}
		\end{subfigure}
		\caption{An example of scene reconstruction under probability map, hard map, and soft map representations with the proposed framework.  
			(a) probability map label 
			(b) one-hot encoding of the probability map prediction
			(c) probability map prediction
			(d) hard map label 
			(e) hardened hard map prediction with a threshold of $0.2$ 
			(f) hard map prediction
			(g) ground truth 
			(h) soft map label 
			(i) soft map prediction 
		}
		\label{fig:examples}
	\end{figure*}
	For the case studies, NVIDIA Sionna \cite{HoyCamAouVemBinMarKel:22, HoyAouCamNimBinMarKel:23} is used to generate datasets, and to train and evaluate the neural ISAC systems.
	The considered \ac{ISAC} systems consist of one \ac{ISAC} transceiver with $8$ \ac{Tx} antennas and $8$ \ac{Rx} antennas and two \acp{UE}, each equipped with one \ac{Rx} antenna. 
	We set the center frequency as $6$\,GHz \cite{NaiParAshLeh:J20}.
	One \ac{OFDM} \ac{RB} has $14$ \ac{OFDM} symbols, with the $3$rd and $12$th \ac{OFDM} symbols being pilots arranged in a Kronecker pattern.
	The number of subcarriers is set to $128$, and the default bandwidth is set to $40$\,MHz.
	The first $5$ and the last $6$ subcarriers are guard subcarriers, and the \ac{DC}  subcarrier is set to null.
	The transmitted power at the \ac{ISAC} transceiver is set to $10$\,mW.
	The antennas are vertically polarized.
	The pattern of the antennas at the \ac{ISAC} transceiver is set to the \ac{BS} pattern in the 3GPP 38.901 standard \cite{3GPP:38901}. 
	The pattern of the antennas at the \acp{UE} is set to be isotropic.
	The \ac{Tx} and \ac{Rx} antennas at the \ac{ISAC} transceiver are arranged as \acp{ULA} with a spacing of half operating wavelength, and the \acp{ULA} are oriented toward the origin of the coordinate system. 
	The self-interference cancellation amount between the \ac{Tx} and \ac{Rx} antennas at the \ac{ISAC} transceiver is set to $-40$\,dB.
	The centers of the \ac{Tx} antennas and the \ac{Rx} antennas at the \ac{ISAC} transceiver are set to $(-2.4, 0.1, 2.5)$\,m and $(-2.4, -0.1, 2.5)$\,m, respectively. 
	The communication \ac{CSI} is assumed to be known from the \ac{UL} transmissions.
	The targets are cubes with a Lambertian scattering pattern and a scattering coefficient of $1$.
	The noise figure is set to $9$\,dB and antenna temperature is set to $290$\,K \cite{NgoAshYanLarMar:J17}.
	
	In the following simulations, the scene shown in Fig.~\ref{fig:sceneplot} is considered for the case studies. 
	The \ac{ISAC} transceivers, two \acp{UE}, and one target are enclosed in a $5$\,m$\times5$\,m$\times3$\,m concrete box whose geometric center is placed at $(0, 0, 1.5)$\,m.
	The two \acp{UE}, represented by the two small green balls at the lower position in Fig~\ref{fig:sceneplot}, are uniformly distributed in the space $[-2, 2] \times [-2, 2] \times 1$\,m$^3$.
	The target is a $0.5$\,m$\times 0.5$\,m$\times 0.5$\,m cube, represented by the blue cube in Fig~\ref{fig:sceneplot}. Its geometric center is  uniformly placed in the space $[-2, 2]\times [-2, 2] \times \{1\}$\,m$^3$. The \ac{RoI} $\Set R$ is the space $[-2.5, 2.5]\times [-2.5, 2.5] \times \{1\}$\,m$^3$. The targets and the \acp{UE} have no physical overlap.
	By default, the map resolution of the three representations is set to $5\times 5$.
	The observation interval is set to the duration corresponding to $20$ \ac{OFDM} \acp{RB}.
	
	The dataset is generated as described in Section~\ref{ssec:supvised_learning_nisac}.
	The number of paths for \ac{CIR} generation between one pair of \ac{Tx} and \ac{Rx} antennas is set to $75$. The number of path samples for the Sionna ray tracer is set to $10^5$, and the maximal depth is set to $5$. The reflection, scattering, and diffraction effects are considered in \ac{CIR} generation.
	The number of epochs is set to $100$.
	The default batch size is set to $1024$.
	The initial learning rate is set to $10^{-5}$, and the Adam optimizer is used for parameter updates.
	The size of the generated dataset is $10^5$.
	The training and test datasets are split from the generated dataset, and the train-test ratio is set to $4:1$.

	\subsection{An Example of Probability Map, Hard Map, and Soft Map Estimations}
	To illustrate the proposed discrete map estimation, we exhibit an example of probability map, hard map, and soft map estimations.
	The sensing channel estimation is performed with Tikhonov regularization, as specified in \eqref{eq:ls_est:tikhonov}.
	The regularization factors are different for scene reconstruction under different map representations. Specifically, the regularization factors for probability map, hard map, and soft map estimations are set to $10^{-2}$, $10^{-3}$, and $10^{-3}$, respectively.
	The features extracted from the channel estimates are the beamspace and delay-domain features specified in Section~\ref{ssec:beam_delay_features}.
	The reference sensing \ac{CSI} is assumed to be known.
	The features extracted from the estimated sensing \ac{CSI} and reference sensing \ac{CSI} are fused via the subtraction operation~$g_{\rm fus}^{\rm SUB}$.
	
	Fig.~\ref{fig:examples} shows an example of scene reconstruction under the discrete map representations associated to a sample in the test dataset. 
	Fig.~\ref{fig:examples:groud_truth} shows the \ac{2D} ground truth of the scene to be reconstructed. The \ac{RoI} is indicated by the black dashed lines, and the gray vertical and horizontal lines partition the RoI into $1$\,m$\times 1$\,m cells. The target is indicated by the purple box, and its geometric center is indicated by the purple cross. 
	Fig.~\ref{fig:examples:prob_map:label},~\ref{fig:examples:hard_map:label},~and~\ref{fig:examples:soft_map:label} are the probability map label, hard map label, and soft map label, respectively. The goal of scene reconstruction under the three representations is to reconstruct these labels.
	Fig.~\ref{fig:examples:prob_map:soft_predict},~\ref{fig:examples:hard_map:soft_predict},~and~\ref{fig:examples:soft_map:soft_predict} are the direct outputs of the output activation $\text{sigmoid}(\cdot) $ or $\text{softmax}(\cdot)$ as shown in Fig.~\ref{fig:arch_neural_sensing}.
	In addition, to match the probability map label and hard map label, the direct map predictions of the neural sensing estimator need to be discretized to $\{0,1\}$-valued maps. Specifically, the estimated probability map in Fig~\ref{fig:examples:prob_map:hard_predict} is the one-hot encoding of the probability map prediction in Fig.~\ref{fig:examples:prob_map:soft_predict}, and the estimated hard map in Fig.~\ref{fig:examples:hard_map:hard_predict} is obtained by discretizing the hard map prediction in Fig.~\ref{fig:examples:hard_map:soft_predict} with a threshold of $0.2$.
	In this example, the trained neural sensing estimators for probability map and hard map correctly estimate the corresponding map labels, while the trained neural sensing estimator for soft map cannot perfectly match the prediction with the label. 
	This is due to the additional robustness provided by the discretizations in the probability map and hard map estimations.
	The comparison among the predictions in Fig.~\ref{fig:examples:prob_map:soft_predict},~\ref{fig:examples:hard_map:soft_predict},~and~\ref{fig:examples:soft_map:soft_predict} shows that the probability value of the cell occupied by the geometric center of the target is the lowest (darkest in color) in the soft map prediction. This indicates that training with soft map labels enables the trained neural sensing estimator to learn partial shape information of the target.

	\subsection{Performance Comparison Among Different Feature Fusions}
	
	In this simulation, we focus on neural \ac{ISAC} systems with beamspace and delay-domain features as described in Section~\ref{ssec:beam_delay_features}. 
	Three feature fusions and two channel estimators, as mentioned in Section~\ref{sec:sp_flow}, are evaluated and compared. 
	Since the considered \ac{RoI} is a square, square cells are always considered. Therefore, the number of cells per side can be used to represent the map resolution.
	For scene reconstruction under the probability map representation, the performance metric is the accuracy in correctly identifying whether the cell is occupied by the geometric center of the target.
	For scene reconstruction under the hard map representation, the performance metric is the precision-recall break-even point. Specifically, this metric is obtained by first sweeping the threshold for discretizing the hard map prediction to a $\{0,1\}$-valued map and then interpolating to acquire a precision-recall curve and then picking the point where precision equals to recall.
	For scene reconstruction under the soft map representation, the performance metric is the \ac{BCE} loss, which reflects the distance between label and prediction.
	
	\begin{figure}[!t]
		\centering
		\includegraphics[width=0.9\columnwidth]{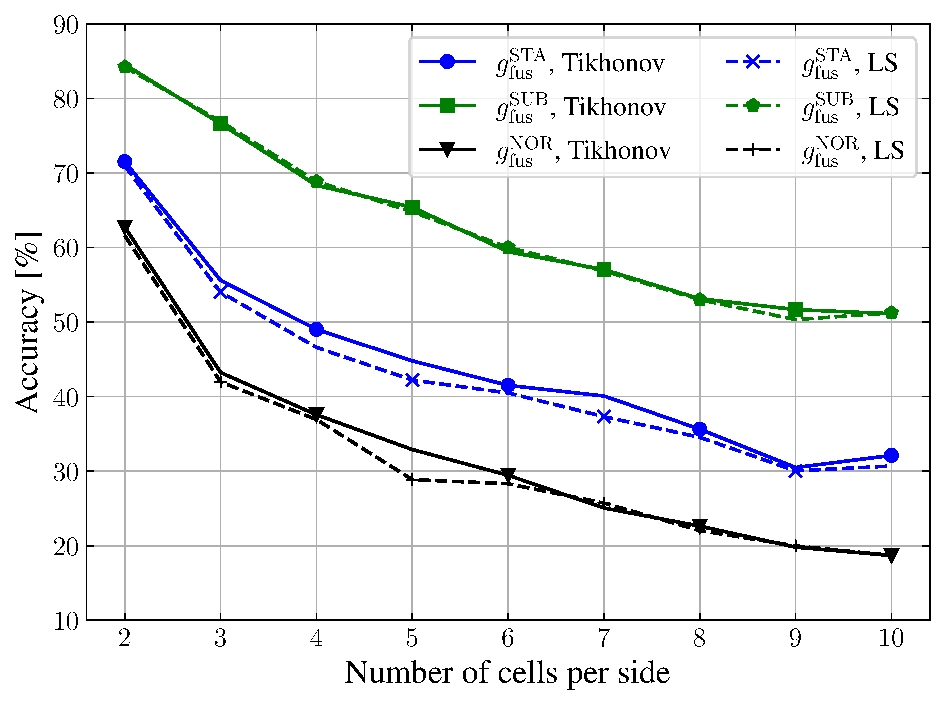} 
		\caption{Performance of probability map estimations with respect to the number of cells per side.}
		\label{fig:probmap_fusion_compare}
	\end{figure}
	
	\begin{figure}[!t]
		\centering
		\includegraphics[width=0.9\columnwidth]{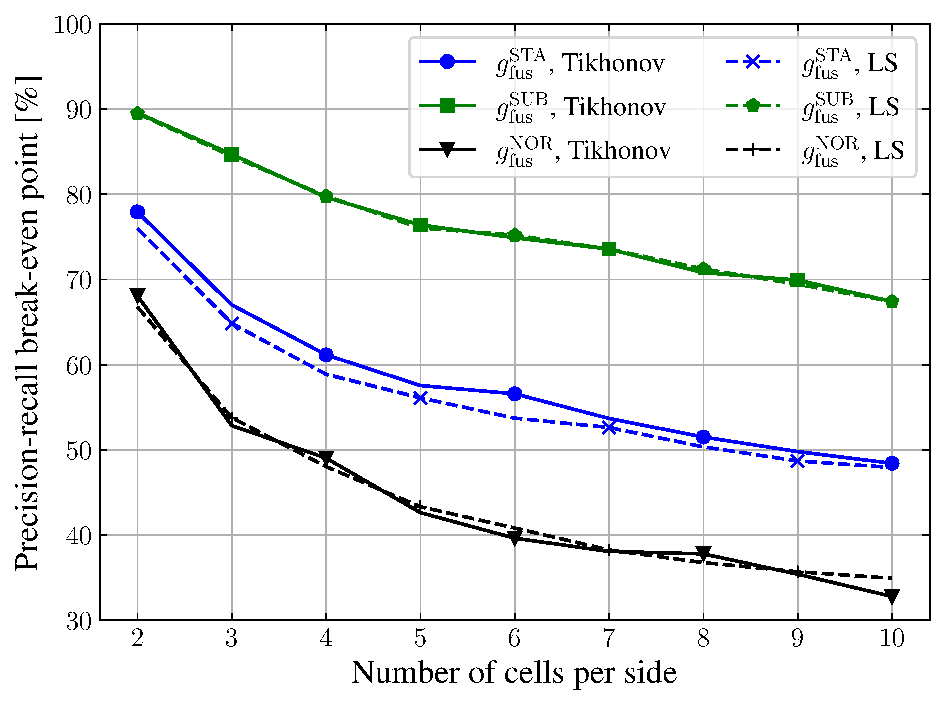}
		\caption{Performance of hard map estimations with respect to the number of cells per side.}
		\label{fig:hardmap_fusion_compare}
	\end{figure}
	
	The Figures~\ref{fig:probmap_fusion_compare}~and~\ref{fig:hardmap_fusion_compare} show that as the number of cells per side increases, the accuracy and precision-recall break-even point decrease. This is because a higher number of cells in a map grid results in smaller cell sizes, leading to a higher likelihood of incorrect decisions.
	Under a fixed map resolution, the accuracy and the precision-recall break-even point reflect the estimation quality of the reconstructed maps. Therefore, this indicates a fundamental trade-off between map resolution and estimation quality in scene reconstruction. Specifically, given limited information, increasing map resolution constrains the upper bound of the achievable estimation quality.
	The Figures~\ref{fig:probmap_fusion_compare}~and~\ref{fig:hardmap_fusion_compare} also show that, for both types of channel estimation considered, neural \ac{ISAC} systems using beamspace and delay-domain features achieve the best performance with subtraction fusion, followed by stacking fusion, and lastly by the case without any environmental prior knowledge, denoted as $g_{\rm fus}^{\rm SUB}$, $g_{\rm fus}^{\rm STA}$, and $g_{\rm fus}^{\rm NOR}$, respectively. 
	In the probability map estimation, better performance refers to higher accuracy, whereas in the hard map estimation, it refers to a higher precision-recall break-even point. 
	The superior performance of fusions $g_{\rm fus}^{\rm SUB}$ and $g_{\rm fus}^{\rm STA}$ compared to $g_{\rm fus}^{\rm NOR}$ is due to their incorporation of additional environmental prior information. Moreover, the superiority of fusion $g_{\rm fus}^{\rm SUB}$ over $g_{\rm fus}^{\rm STA}$ indicates that the subtraction fusion provides more effective prior information for environmental component removal compared to directly training with stacked features.

	\begin{figure}[!t]
		\centering
		\includegraphics[width=0.9\columnwidth]{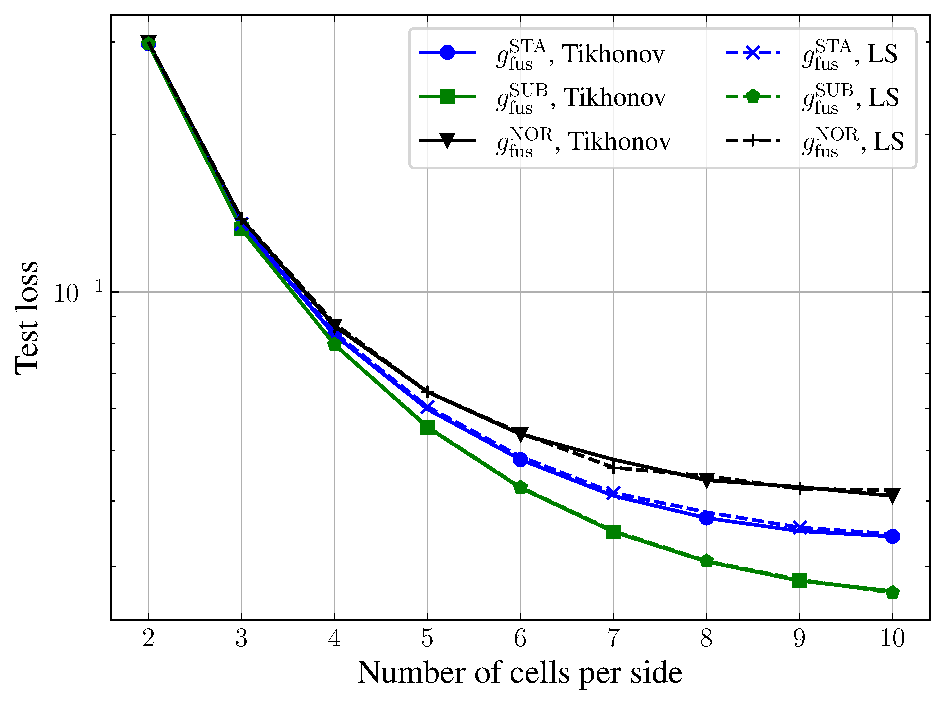}
		\caption{Performance of soft map estimations with respect to the number of cells per side.}
		\label{fig:softmap_fusion_compare}
	\end{figure}

	Figure~\ref{fig:softmap_fusion_compare} shows that as the number of cells per side increases the \ac{BCE} loss over the test dataset decreases. This is because, in the considered scene, a higher number of cells results in a greater proportion of cells that are not occupied by the target, leading to an easier overall prediction. This indicates that the dominant differences between the label and prediction in soft map representations arise from the cells near the non-zero cells of the label, which can also be observed in Fig.~\ref{fig:examples:soft_map:label}~and~\ref{fig:examples:soft_map:soft_predict}. 
	Fig.~\ref{fig:softmap_fusion_compare} also shows that, for the same map resolution, the test loss is lowest with $g_{\rm fus}^{\rm SUB}$, followed by $g_{\rm fus}^{\rm STA}$, and highest with $g_{\rm fus}^{\rm NOR}$. This is consistent with the observations in probability map and hard map estimations, as shown in Fig.~\ref{fig:probmap_fusion_compare}~and~\ref{fig:hardmap_fusion_compare}, for similar reasons. In the soft map representation, better performance means a lower test loss at the same map resolution.
	
	In addition, Figures~\ref{fig:probmap_fusion_compare},~\ref{fig:hardmap_fusion_compare},~and~\ref{fig:softmap_fusion_compare} show that the system performance with the Tikhonov-regularized channel estimator is similar to that with the \ac{LS} channel estimator when either $g_{\rm fus}^{\rm SUB}$ or $g_{\rm fus}^{\rm NOR}$ is adopted. This indicates that having either no prior knowledge or sufficient prior knowledge for environmental component removal can eliminate the performance difference between the two channel estimators. 
	Fig.~\ref{fig:probmap_fusion_compare},~\ref{fig:hardmap_fusion_compare},~and~\ref{fig:softmap_fusion_compare} also show that, when the fusion $g_{\rm fus}^{\rm STA}$ is adopted, the system performance with the Tikhonov-regularized channel estimator is strictly better than that with the \ac{LS} channel estimator. This indicates that prior knowledge of environmental components alone is insufficient to fully compensate for the information difference between the Tikhonov-regularized and \ac{LS} channel estimators.
	Therefore, in the following simulations, we use the Tikhonov-regularized channel estimator as defined in \eqref{eq:ls_est:tikhonov}.

	\subsection{Impact of the Tikhonov Regularization Factor on Sensing}\label{ssec:regu_factor_impact}
	
	\begin{figure}[!t]
		\centering
		\includegraphics[width=0.9\columnwidth]{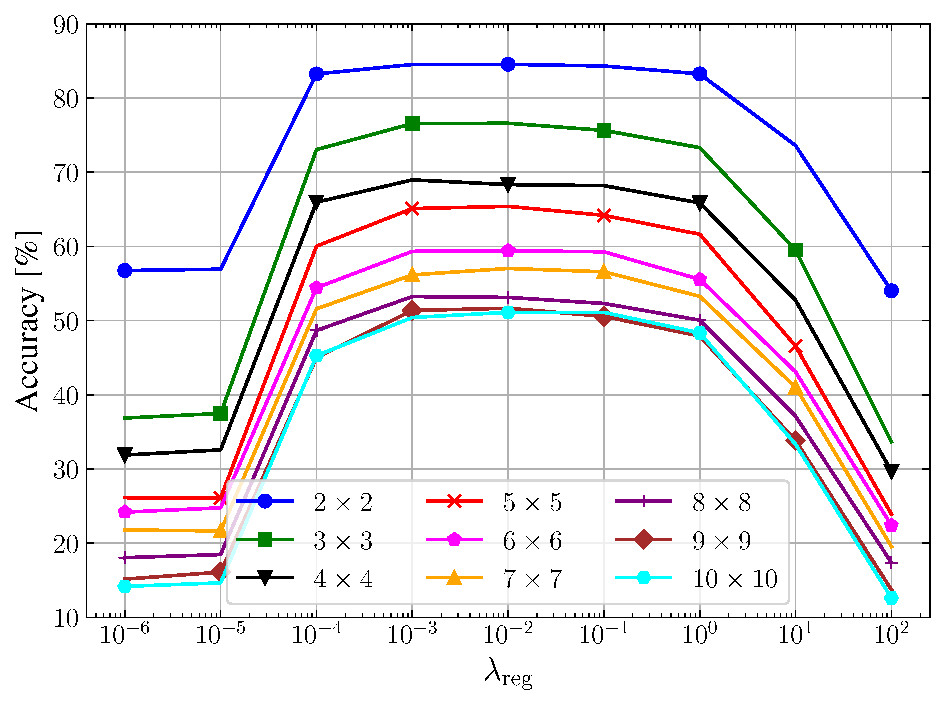}
		\caption{Impact of Tikhonov regularization factor $\lambda _{\rm reg}$ on the accuracy for probability map estimation with subtraction fusion.}
		\label{fig:regu_factor_impact}
	\end{figure}

	Different from the neural sensing estimator with \ac{LS} channel estimation, the performance of the neural sensing estimator with Tikhonov regularization is affected by the regularization factor $\lambda _{\rm reg}$. 
	Due to the similarity among the impacts of $\lambda _{\rm reg}$ on neural sensing estimators, we only present the accuracy with respect to the regularization factor $\lambda _{\rm reg}$ under different resolutions for probability map estimation with beamspace and delay-domain features and the subtraction fusion.
	
	Fig.~\ref{fig:regu_factor_impact} shows that as $\lambda _{\rm reg}$ increases, the accuracy of the probability map estimation first increases and then decreases. This is because, according to Remark~\ref{rmk:prop:tikhonov}, if $\lambda _{\rm reg}$ is either too small or too large, the Frobenius-norm constraint becomes inappropriate, leading to a suboptimal solution. This indicates the importance of selecting an appropriate value for $\lambda _{\rm reg}$ to optimize system performance.
	Fig.~\ref{fig:regu_factor_impact} also shows that the curves are similar across different map resolutions and remain approximately flat within the interval $[10^{-3}, 10^{-1}]$. This indicates that within a suitable range, small variations in $\lambda_{\rm reg}$ have a negligible impact on system performance across various map resolutions.
	
	\begin{figure}[!t]
		\centering
		\includegraphics[width=0.9\columnwidth]{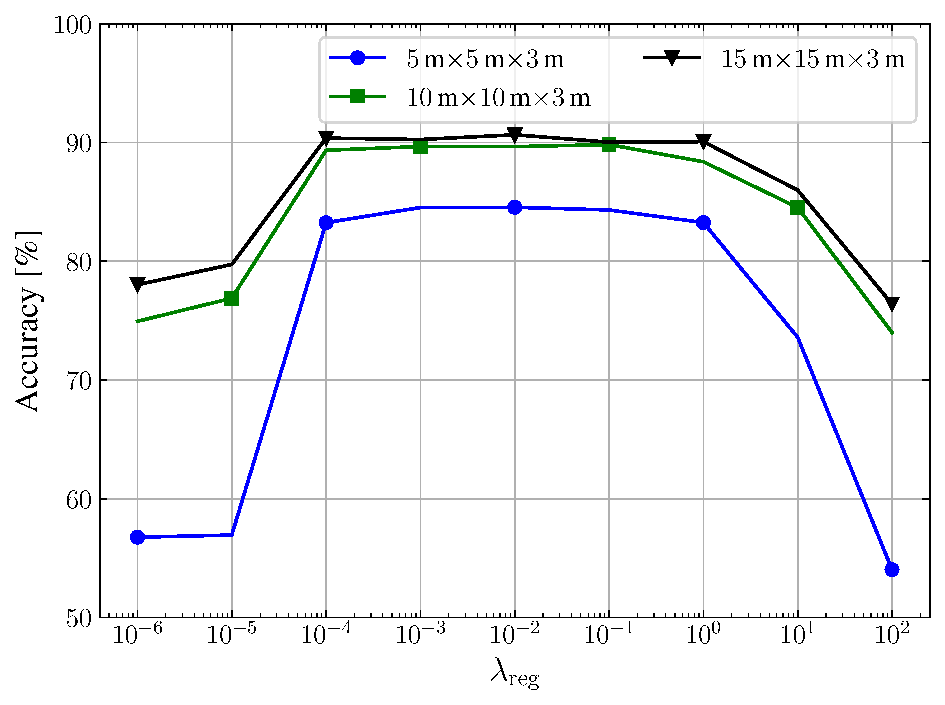}
		\caption{Impact of Tikhonov regularization factor $\lambda _{\rm reg}$ and environmental conditions on the accuracy for probability map estimation with subtraction fusion.}
		\label{fig:regu_factor_impact_diff_scene}
	\end{figure}

	To further verify that the system performance is insensitive to the variations in $\lambda _{\rm reg}$ within a suitable range, we consider different environmental conditions. 
	Specifically, with all other settings unchanged, we examined the $2\times 2$ probability map estimation with two concrete box sizes, i.e., $10\,\text{m}\times10\,\text{m}\times3\,\text{m}$ and $15\,\text{m}\times15\,\text{m}\times3\,\text{m}$.
	Fig.~\ref{fig:regu_factor_impact_diff_scene} shows that the curves are similar across the scenarios with different concrete boxes and remain approximately flat within the interval $[10^{-3}, 10^{-1}]$. This indicates that the impact of environmental conditions on the insensitivity of system performance to $\lambda _{\rm reg}$ is insignificant.
	Fig.~\ref{fig:regu_factor_impact_diff_scene} also shows that as the length and width of the concrete box increase, the accuracy of the probability map estimation improves.
	This is because, when the environment is farther from the \ac{ISAC} transceiver, the path loss along target-irrelevant reflection paths increases, and thus the target-irrelevant signal components are reduced in the received sensing signals.
	Therefore, different environmental conditions may affect the system performance, but their impact on the insensitivity of system performance to $\lambda_{\rm reg}$ is insignificant in the considered cases.

	\subsection{Impact of the Target Speed on Sensing}
	
	In this simulation, we evaluate the impact of the target speed on the system performance.
	In this case, both the sensing and communication channels are time-varying. 
	Due to the Doppler effect caused by the target velocity, the channel tensors vary across different \ac{OFDM} \acp{RB}.
	The neural sensing estimator is trained using a dataset generated from static scenes. The batch size is set to $512$.
	The test datasets are generated from dynamic scenes where the target speeds are fixed, and the velocity directions are uniformly distributed in all directions.
	Similar to Section~\ref{ssec:regu_factor_impact}, we only present the accuracy with respect to the regularization factor $\lambda _{\rm reg}$ under different resolutions for probability map estimation with beamspace and delay-domain features and the subtraction fusion.
	
	\begin{figure}[!t]
		\centering
		\includegraphics[width=0.9\columnwidth]{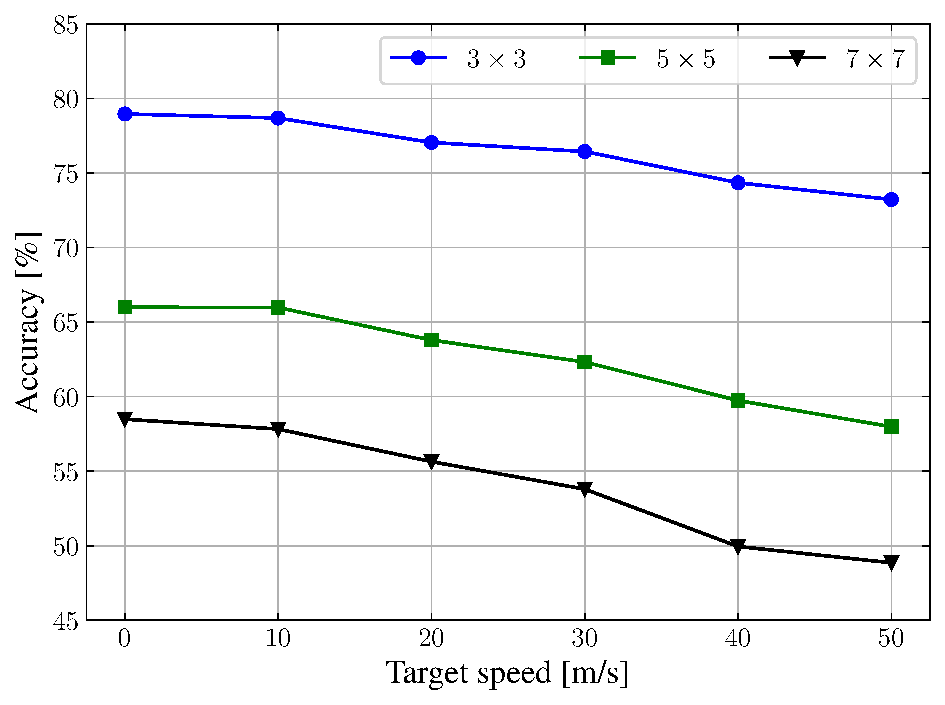}
		\caption{Impact of target speed on the accuracy for probability map estimation with subtraction fusion.}
		\label{fig:target_speed_impact}
	\end{figure}
	
	Fig.~\ref{fig:target_speed_impact} shows that the accuracy decreases as the target speed increases. This is because the Doppler effect introduces variations in the received sensing signals, which are not realized by the sensing channel estimator for time-invariant channels.  
	Fig.~\ref{fig:target_speed_impact} also shows that for target speeds below $10$\,m/s, 
	the accuracies remain nearly unchanged, and for speeds below $30$\,m/s, the accuracy drops are within $5\%$.   
	Therefore, in indoor and urban outdoor scenarios, the neural sensing estimator trained on static scenes maintains stable performance despite target motion.

	\subsection{Sensing Performance of the Proposed Neural ISAC Framework Under Different Bandwidths}
	
	\begin{figure}[!t]
		\centering
		\includegraphics[width=0.9\columnwidth]{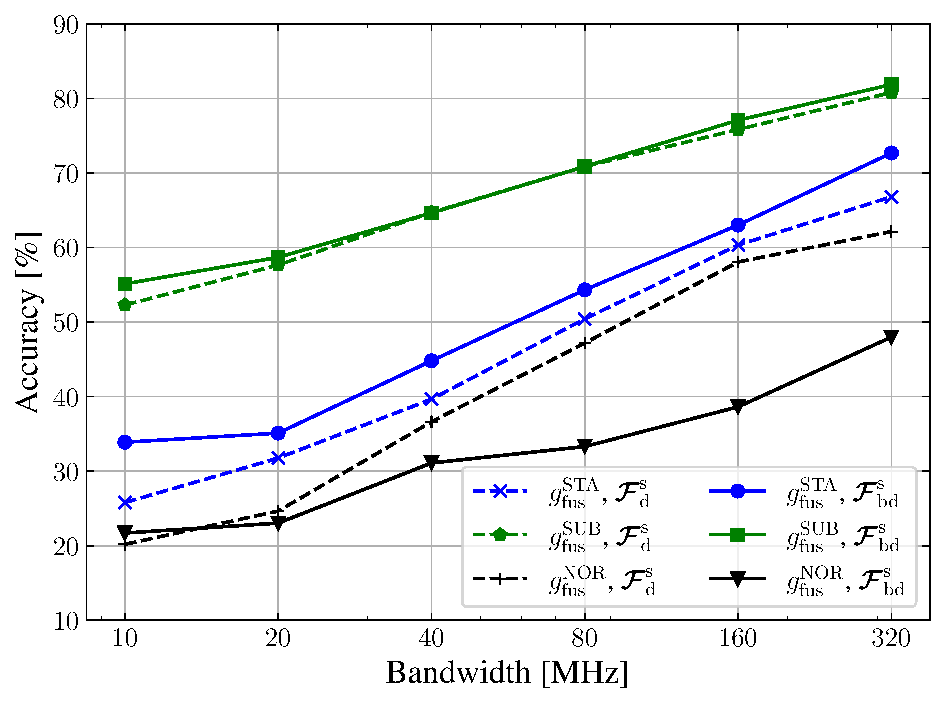}
		\caption{Performance of probability map estimation with respect to bandwidth. }
		\label{fig:bandwidth_impact:prob_map}
	\end{figure}
	
	\begin{figure}[!t]
		\centering
		\includegraphics[width=0.9\columnwidth]{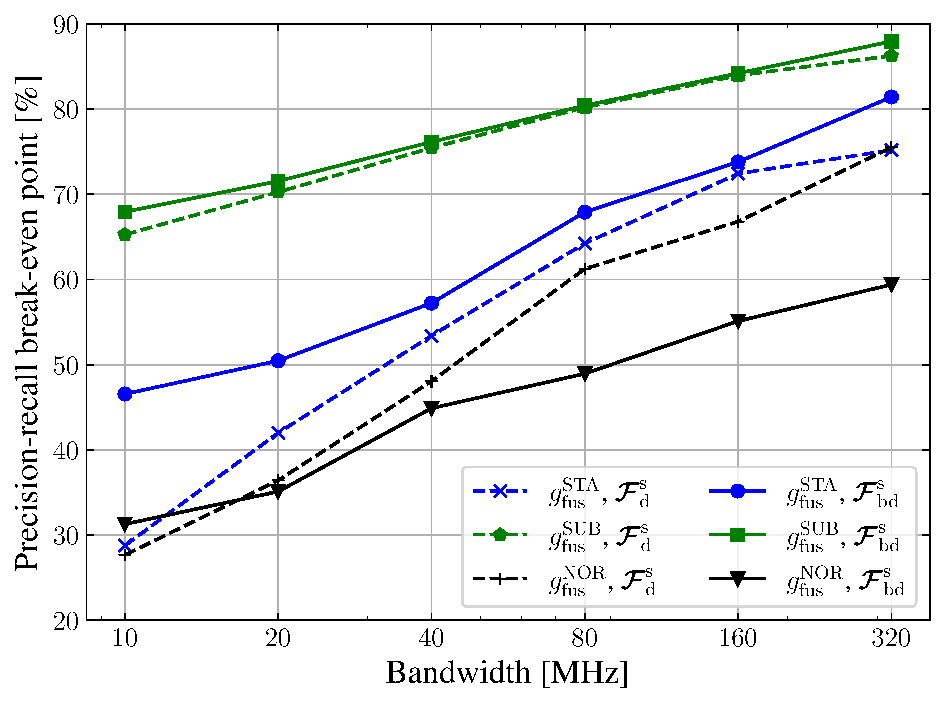}
		\caption{Performance of hard map estimation with respect to bandwidth. }
		\label{fig:bandwidth_impact:hard_map}
	\end{figure}
	
	\begin{figure}[!t]
		\centering
		\includegraphics[width=0.9\columnwidth]{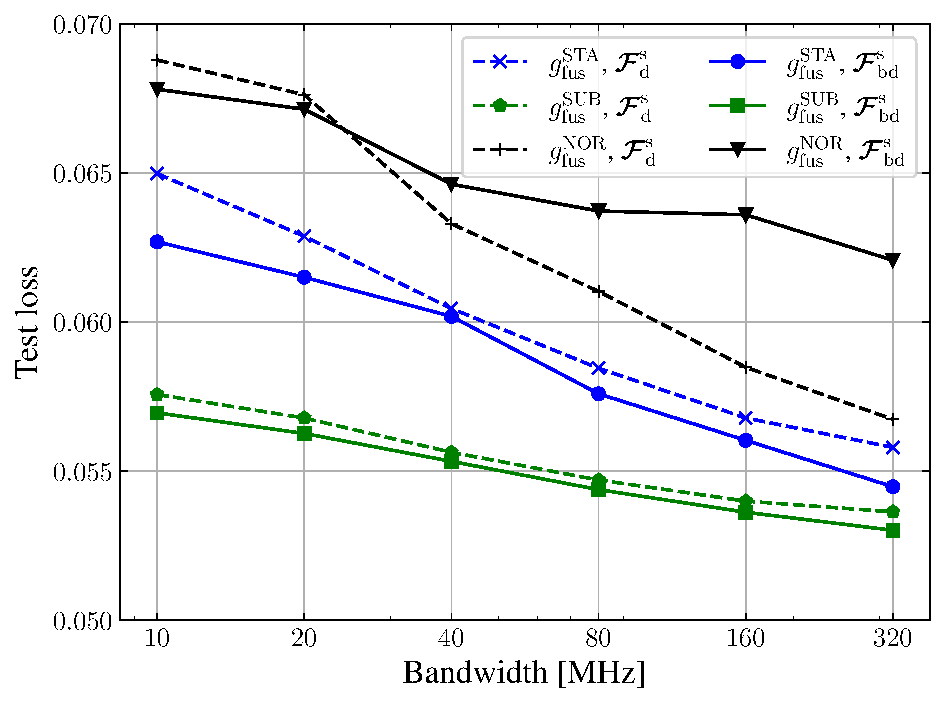}
		\caption{Performance of soft map estimation with respect to bandwidth. }
		\label{fig:bandwidth_impact:soft_map}
	\end{figure}

	In this simulation, we evaluate the system performance under different bandwidths. To ensure fairness in comparison, the number of subcarrier is fixed at $128$. Consequently, the subcarrier spacings corresponding to bandwidths of $10$\,MHz, $20$\,MHz, $40$\,MHz, $80$\,MHz, $160$\,MHz, and $320$\,MHz are the integer multiples of $78.125$\,kHz, aligning with the latest Wi-Fi standards \cite{IEEE802.11ax-2021, IEEE802.11be-D5.0}. 
	Besides, the combinations of different features and different fusions are considered. The curves related to direct features are labeled with $\Tsr F_{\rm d}^{\rm s}$, while those related to beamspace and delay-domain features are labeled with $\Tsr F_{\rm bd}^{\rm s}$.
	
	Fig.~\ref{fig:bandwidth_impact:prob_map}~and~\ref{fig:bandwidth_impact:hard_map} show that as the bandwidth increases, both accuracy and precision-recall break-even point increase. This is because the wider bandwidth enhances the time resolution in the delay domain, making the beamspace and delay-domain features more sensitive to spatial information of the scene.
	Therefore, the sensing performance can be improved by simply reusing the communication signals with more spectrum resources.
	Fig.~\ref{fig:bandwidth_impact:prob_map}~and~\ref{fig:bandwidth_impact:hard_map} show that, in the absence of prior knowledge on reference sensing \ac{CSI}, the neural \ac{ISAC} systems using direct features achieve higher accuracy and precision-recall break-even point when the bandwidth exceeds $20$\,MHz. This is because, though the beamspace and delay-domain features effectively facilitate learning in the \ac{NN}, the spatial information related to the target is obscured by the strong environmental interference. Therefore, the performance of neural \ac{ISAC} systems using beamspace and delay-domain features becomes inferior to that of systems using direct features.
	Fig.~\ref{fig:bandwidth_impact:prob_map}~and~\ref{fig:bandwidth_impact:hard_map} also show that, with the prior knowledge on reference sensing \ac{CSI}, the neural \ac{ISAC} systems using the beamspace and delay-domain features achieve consistently higher accuracy and precision-recall break-even point compared to those using the direct features. This is because the feature fusion mitigates the environmental interference in the extracted features. In this case, the \ac{NN} can learn more efficiently with the beamspace and delay-domain features than when using direct features. 
	Therefore, for scene reconstruction under probability map or hard map representation with a resolution of $5\times 5$, after mitigating environmental interference, beamspace and delay-domain features enable the neural \ac{ISAC} systems to achieve better sensing performance. However, when environmental interference remains strong and the bandwidth exceeds $20$\,MHz, the neural \ac{ISAC} systems using direct features achieve better sensing performance.
	
	Fig.~\ref{fig:bandwidth_impact:soft_map} shows that as the bandwidth increases, the \ac{BCE} loss over the test dataset decreases. Additionally, with the prior knowledge on reference sensing \ac{CSI}, the neural ISAC systems using the beamspace and delay-domain features achieve strictly lower test loss compared to those using the direct features. The reason is similar to that observed in probability map and hard map estimations. 
	Fig.~\ref{fig:bandwidth_impact:soft_map} also shows that in the absence of prior knowledge on reference sensing \ac{CSI}, the neural \ac{ISAC} systems using direct features achieve lower test loss when the bandwidth exceeds $40$\,MHz. 
	Therefore, for scene reconstruction under soft map representations with a resolution of $5\times 5$, after mitigating environmental interference, beamspace and delay-domain features enable the neural ISAC systems to achieve better sensing performance. However, when environmental interference remains strong and the bandwidth exceeds $40$\,MHz, the neural \ac{ISAC} systems using direct features achieve better sensing performance.
	
	\subsection{Limitations} 
	\label{ssec:limit}
	
	With the proposed framework, neural \ac{ISAC} systems can be trained offline using available digital twins, such as OpenStreetMap. However, the primary limitation of the proposed framework is the requirement for additional calibration phases in practical deployments, which arises from two additional considerations. First, online fine-tuning is necessary during deployment to compensate for hardware impairments. Second, acquiring prior knowledge of environmental information is important for ensuring reliable sensing performance during both the offline training and online fine-tuning phases. For instance, as shown in Fig.~\ref{fig:bandwidth_impact:prob_map}, the accuracy can degrade by more than $30$\% in the absence of prior knowledge.

	\section{Conclusions} \label{sec:conclusion}
	This paper has introduced the concept of discrete map representations for scene reconstructions and proposed a standard-compatible signal processing framework for neural \ac{ISAC} systems. 
	In particular, the proposed framework has reused \ac{MIMO}-\ac{OFDM} \ac{DL} signals to sense the scene without modifying the communication links.
	Based on the proposed discrete map representations, we have first decomposed and converted the neural \ac{ISAC} problems into multiclass or multilabel classification problems.
	Then, we have designed effective beamspace and delay-domain features, as well as feature fusion techniques that efficiently incorporate prior knowledge on environmental information, for NN training in neural \ac{ISAC} systems.
	We have shown that, in the presence of prior knowledge on reference sensing \ac{CSI}, subtraction fusion effectively removes environmental interference, including self-interference, and outperforms directly stacking features extracted from the reference sensing \ac{CSI} with those extracted from the estimated sensing \ac{CSI} as additional \ac{NN} inputs.
	For the three considered feature fusion techniques, imposing Tikhonov regularization in channel estimation ensures performance that is at least as good as using \ac{LS} estimation.
	Moreover, a larger communication bandwidth can significantly improve the sensing performance of the neural \ac{ISAC} systems.
	Furthermore, if prior knowledge on reference sensing \ac{CSI} is available, the environmental interference in the extracted features can be mitigated using the proposed feature fusion. In this case, beamspace and delay-domain features outperform direct features. Nevertheless, in the absence of prior knowledge on reference sensing \ac{CSI}, direct features provide better performance than beamspace and delay-domain features at bandwidths of $40$\,MHz or higher.
	The findings in this paper can serve as a guideline for the implementation and practical deployment of such minimally invasive neural \ac{ISAC} systems.

	\appendices
	
	\section{Proof of Proposition~\ref{prop:biased_lse}}\label{apd:proof:prop:biased_lse}
	
	The pseudo-inverse  $\left(\M P_w \M S_w\right)^\dagger$ is given by
	\begin{IEEEeqnarray}{rCl}
		\bar{\M V}_w \bar{\M \Sigma}_w^{-1} \bar{\M U}_w^\nH = \sum_{k=1}^{r_w} \left[\M \Sigma_w\right]_{k,k}\left[\M V_w\right]_{:,k} \left[\M U_w\right]_{:,k}^\nH
	\end{IEEEeqnarray}
	where the matrix $\bar{\M V}_w \in \mathbb{C} ^{L \times r_w}$ is given by $\left[\M V_w\right]_{:,1:r_w}$, the matrix $\bar{\M \Sigma}_w \in \mathbb{R} ^{r_w \times r_w}$ is given by $\left[\M \Sigma_w\right]_{1:r_w, 1:r_w}$, and the matrix $\bar{\M U}_w \in \mathbb{C} ^{N_{\rm t}\times r_w}$ is given by $\left[\M U_w\right]_{:, 1:r_w}$. In this case, the product of the matrix $\M P_w \M S_w$ and its pseudo-inverse $\left(\M P_w \M S_w\right)^{\dagger}$ is no longer a identity matrix, and the product matrix is given by
	\begin{IEEEeqnarray}{rCl}
		\M P_w \M S_w\left(\M P_w \M S_w\right)^\dagger 
		&=& \M U_w \M \Sigma_w \M V_w^\nH \bar{\M V}_w \bar{\M \Sigma}_w^{-1} \bar{\M U}_w^\nH \nonumber \\
		&=&\bar{\M U}_w \bar{\M \Sigma}_w \bar{\M V}_w^\nH \bar{\M V}_w \bar{\M \Sigma}_w^{-1} \bar{\M U}_w^\nH \nonumber\\
		&=& \bar{\M U}_w \bar{\M U}_w^\nH = \sum_{k=1}^{r_w}\left[\M U_w\right]_{:, k}\left[\M U_w\right]_{:, k}^\nH\,,
	\end{IEEEeqnarray}
	which is a projection matrix.
	Moreover, note that the \ac{LS} solution of the problem $\mathscr{P}_w^{\rm sce}$ is given by
	\begin{IEEEeqnarray}{rCl}
		\hat{\M H}_w^{\rm s} = \M Y_w^{\rm s} \left(\M P_w \M S_w\right)^\dagger\,. \label{eq:ls_est}
	\end{IEEEeqnarray}
	Substituting the signal model \eqref{eq:sys_model:sens:w} into \eqref{eq:ls_est} concludes the proof.
	
	\section{Proof of Proposition~\ref{prop:tikhonov}}\label{apd:proof:prop:tikhonov}
	For simplicity, define the objective function of $\mathscr{P}_w^{{\rm sce}(1)}$ as
	\begin{IEEEeqnarray}{rCl}
		L(\mathring{\M H}_w^{\rm s}) \triangleq\left\Vert \M Y_w^{\rm s} - \mathring{\M H}_w^{\rm s} \M P_w \M S_w \right\Vert_{\rm F}^2 + \lambda _{\rm reg} \left\Vert \mathring{\M H}_w^{\rm s} \right\Vert_{\rm F}^2\,.
	\end{IEEEeqnarray}
	According to \cite{PetPed:08}, the derivative of the objective function with respect to the parameters is given by
	\begin{IEEEeqnarray}{rCl}
		\frac{\partial L(\mathring{\M H}_w^{\rm s})}{\partial \big(\mathring{\M H}_w^{\rm s}\big)^\nH} = - 2 \M P_w \M S_w \big(\M Y_w^{\rm s} - \mathring{\M H}_w^{\rm s} \M P_w \M S_w\big)^\nH + 2 \lambda _{\rm reg} \big(\mathring{\M H}^{\rm s}_w\big)^\nH.\nonumber\\
	\end{IEEEeqnarray}
	Let ${\partial L(\mathring{\M H}_w^{\rm s})}/{\partial \big(\mathring{\M H}_w^{\rm s}\big)^\nH}\big|_{\mathring{\M H}_w^{\rm s} = \hat{\M H}_w^{\rm s}} = 0 $, then
	\begin{IEEEeqnarray}{rCl}
		\M P_w \M S_w \big(\M Y_w^{\rm s}\big)^\nH = \M P_w \M S_w \M S_w^\nH \M P_w \big(\hat{\M H}_w^{\rm s}\big)^\nH + \lambda _{\rm reg}\big( \hat{\M H}_w^{\rm s}\big)^\nH\,.\nonumber\\ \label{eq:prop:tikhonov:proof_1}
	\end{IEEEeqnarray}
	Equation~\eqref{eq:prop:tikhonov:proof_1} can be transformed to
	\begin{IEEEeqnarray}{rCl}
		\hat{\M H}_w^{\rm s}\left(\M P_w \M S_w \M S_w^\nH \M P_w^\nH  + \lambda _{\rm reg} \M I_{N_{\rm t}}\right) = \M Y_w^{\rm s} \M S_w^\nH \M P_w^\nH\,.\label{eq:prop:tikhonov:proof_2}
	\end{IEEEeqnarray}
	Multiplying both sides of equation~\eqref{eq:prop:tikhonov:proof_2} by 
	$\left(\M P_w \M S_w \M S_w^\nH \M P_w^\nH  + \lambda _{\rm reg} \M I_{N_{\rm t}}\right)^{-1}$ concludes the proof.
	
	\section{Proof of Lemma~\ref{prop:noise_avg}}\label{apd:proof:prop:noise_avg}
	Note that the entries of $\M S_w$ are drawn independently from a uniform distribution over the constellation map $\Set C$, the entries of $\M N_w^{\rm s}$ follow independently from a zero-mean complex Gaussian distribution with a variance of $N_0$, and $\M S_w$ and $ \M N_w$ are statistically independent. 
	Let
	\begin{IEEEeqnarray*}{rCl}
		\mathring{\M N}_w \triangleq \M N_{\rm w}^{\rm s} \M S_w^\nH/L\,,
	\end{IEEEeqnarray*}
	then for the $(n_{\rm r}, k)$th entry of $\tilde{\M N}_w$, its expectation is given by
	\begin{IEEEeqnarray}{rCl}
		\E{\left[\mathring{\M N}_w^{\rm s}\right]_{n_{\rm r}, k} } 
		&=& \frac{1}{L} \sum_{l=1}^{L} \E{\left[\M N_w^{\rm s}\right]_{n_{\rm r}, l}} {\left[\M S_w\right]_{k, l}^*}
		\\
		&=& 0
	\end{IEEEeqnarray}
	and its variance is given by
	\begin{IEEEeqnarray}{rCl}
		\E{\Big|\left[\mathring{\M N}_w^{\rm s}\right]_{n_{\rm r}, k}\Big|^2}
		&=&
		\frac{1}{L^2} \sum_{l=1}^{L} \E{\big|\left[\M N_w^{\rm s}\right]_{n_{\rm r}, l}\big|^2} {\big|\left[\M S_w\right]_{k, l}^*\big|^2}
		\IEEEeqnarraynumspace
		\\
		&=& \frac{ N_0 }{L}\times \frac{1}{L} \sum_{l=1}^{L}{\big|\left[\M S_w\right]_{k, l}\big|^2}
		\overset{\text{(a)}}{=} \frac{N_0}{L}  
	\end{IEEEeqnarray}
	where (a) follows from the fact that $L \gg K$ and the average energy  of symbols in $\Set C$ is normalized to $1$. Moreover, since the entries of $\M N_w^{\rm s}$ are Gaussian, the entries of $ \M N_w^{\rm s} \M S_w^\nH/L$ are also Gaussian. Hence, the entries of $ \M N_w^{\rm s} \M S_w^\nH/L$  follow independently from a zero-mean complex Gaussian with a variance of $N_0/L$.
	\bibliographystyle{IEEEtran}

	\balance
	
\end{document}